\let\mathbb=\mathds
\def\d{{\text {\rm d}}}
\DeclareMathOperator{\Tr}{Tr}
\DeclareMathOperator{\e}{\mathrm{e}}
\DeclareMathOperator{\F}{\mathcal{F}}
\newcommand{\si}{{\sigma}}
\newcommand{\be}{{\mathbf e}}
\newcommand{\tr}{\operatorname{Tr}}
\newcommand{\al}{{\alpha}}
\newcommand{\ten}{\otimes}
\newcommand{\pl}{\hspace{.1cm}}
\newcommand{\norm}[2]{\parallel   #1   \parallel_{#2}}
        \def\cH{{\cal H}}
\def\0{{\mathbf{0}}}
\def\1{{\mathbf{1}}}
\def\2{{\mathbf{2}}}
\def\3{{\mathbf{3}}}
\def\4{{\mathbf{4}}}
\def\5{{\mathbf{5}}}
\def\6{{\mathbf{6}}}
\def\7{{\mathbf{7}}}
\def\8{{\mathbf{8}}}
\def\9{{\mathbf{9}}}
\def\be{\begin{equation}}
\def\ee{\end{equation}}
\def\bea{\begin{eqnarray}}
\def\eea{\end{eqnarray}}
\def\cH{{\mathcal H}}
\newcommand{\id}{\operatorname{id}}
\theoremstyle{plain}
\newtheorem{theo}{Theorem} 
\newtheorem{prop}[theo]{Proposition} 
\newtheorem{lemm}[theo]{Lemma} 
\theoremstyle{definition}
\theoremstyle{remark}
\newtheorem{remark}{Remark}[section]
\numberwithin{equation}{section}
\newcommand{\opnorm}{\@ifstar\@opnorms\@opnorm}
\newcommand{\@opnorms}[1]{%
	\left|\mkern-1.5mu\left|\mkern-1.5mu\left|
	#1
	\right|\mkern-1.5mu\right|\mkern-1.5mu\right|
}
\newcommand{\@opnorm}[2][]{%
	\mathopen{#1|\mkern-1.5mu#1|\mkern-1.5mu#1|}
	#2
	\mathclose{#1|\mkern-1.5mu#1|\mkern-1.5mu#1|}
}
\begin{document}

\let\origmaketitle\maketitle
\def\maketitle{
	\begingroup
	\def\uppercasenonmath##1{} 
	\let\MakeUppercase\relax 
	\origmaketitle
	\endgroup
}

\title{\bfseries \Large{ Error Exponent and Strong Converse for Quantum Soft Covering
		}}

\author{ \normalsize \textsc{Hao-Chung Cheng$^{1,2,3,4}$ and Li Gao$^{5}$}}
\address{\small  	
	$^{1}$Department of Electrical Engineering and Graduate Institute of Communication Engineering,\\ National Taiwan University, Taipei 106, Taiwan (R.O.C.)\\
	$^{2}$Department of Mathematics, National Taiwan University\\
	$^{3}$Center for Quantum Science and Engineering,  National Taiwan University\\
	$^{4}$Hon Hai (Foxconn) Quantum Computing Center, New Taipei City 236, Taiwan\\
	$^{5}$Department of Mathematics, University of Houston, Houston, TX 77204, USA\\
}

\email{\href{mailto:haochung.ch@gmail.com}{haochung.ch@gmail.com}}
\email{\href{mailto:lgao12@uh.edu}{lgao12@uh.edu}}
\date{\today}
%
%
%
\maketitle

\begin{abstract}
How well can we approximate a quantum channel output state using a random codebook with a certain size? In this work, we study the quantum soft covering problem. Namely, we use a random codebook with codewords independently sampled from a prior distribution and send it through a classical-quantum channel to approximate the target state.
When using a random independent and identically distributed codebook with a rate above the quantum mutual information, we show that the expected trace distance between the codebook-induced state and the target state decays with exponent given by the sandwiched R\'enyi information. On the other hand, when the rate of the codebook size is below the quantum mutual information, the trace distance converges to one exponentially fast.
We obtain similar results when using a random constant composition codebook, whereas the sandwiched Augustin information expresses the error exponent.
In addition to the above large deviation analysis, our results also hold in the moderate deviation regime.
That is, we show that even when the rate of the codebook size approaches the quantum mutual information moderately quickly, the trace distance still vanishes asymptotically.

\end{abstract}

\section{Introduction} \label{sec:introduction}
Consider a classical-quantum (c-q) channel $\mathcal{N}_{X\to B}: x\mapsto \rho_B^x$ that takes every input letter $x\in\mathcal{X}$ to an output quantum state $\rho_B^x$ on a Hilbert space $\mathcal{H}_B$.
Given a probability distribution $p_X$ on the input alphabet $\mathcal{X}$ as an input to the channel, the corresponding output is then given by the marginal state $\rho_B = \sum_{x\in\mathcal{X}}p_X(x)\rho_B^x$.
Suppose we do not directly possess $p_X$ but have a random codebook $\mathcal{C}\subset \mathcal{X}$, in which the codewords are independently sampled from $p_X$. It is natural to expect that for large size of $\mathcal{C}$, the marginal state $\rho_B$ can be approximated by the codebook-induced output state defined by
\begin{align}
	\rho_B^\mathcal{C} := \frac{1}{|\mathcal{C}|} \sum_{x\in\mathcal{C}} \rho_B^x.
\end{align}
Namely, $\rho_B^\mathcal{C}$ is generated by uniformly choosing codewords in $\mathcal{C}$ and passing through the channel $\mathcal{N}_{X\to B}$.
Since the codebook $\mathcal{C}$ here is random, we take the expected value of the trace distance to quantify how well the induced state $\rho_B^\mathcal{C}$ approximates the true marginal state $\rho_B$, i.e.
\begin{align} \label{eq:metric}
	\varepsilon(\mathcal{C}):=\frac12\mathds{E}_\mathcal{C} \left\| \rho_B^\mathcal{C} - \rho_B \right\|_1.
\end{align}

When the underlying channel is classical, such a problem is called \emph{soft covering} and has been actively investigated due to its ample applications in secrecy analysis and some coding problems \cite{Wyn75b, HV93, Hay06sc, Cuf13, HK14, WH14, Cuf16,LCV16, PTM17, YT19, YC19, Yas19}.
In the quantum scenario, this problem was simply termed as ``\emph{covering}", and has been studied in the context of identification, compression, and channel simulation \cite{AW02, DW03, DW05, Win05, CWY04, LD09, BDH+14},  \cite[\S 17]{Wilde17}.
In this work, we study, for different types of random codebooks, how fast the trace distance in \eqref{eq:metric} converges to $0$ or to $1$, when the size of the random codebook is fixed. 
We term this study the \emph{large deviation analysis} for quantum soft covering \cite{YC19, Yas19}.

Firstly,
we establish a one-shot achievability bound (Theorem~\ref{theo:achievability_iid_oneshot}) and a one-shot strong converse bound (Theorem~\ref{theo:sc}) on the trace distance \eqref{eq:metric}, respectively.
These results directly apply to the $n$-shot extension with product channel $\mathcal{N}_{X\to B}^{\otimes n}$.
We prove that for an independent and identically distributed (i.i.d.) random codebook $\mathcal{C}^n$ with rate  $R:= \frac1n \log |\mathcal{C}^n|$, whose codewords being independently sampled from $p_X^{\otimes n}$, the following hold for every $n\in\mathds{N}$ (Propositions~\ref{prop:achievability_iid} and \ref{prop:sc_iid}),
\begin{align} \label{eq:iid}
	\begin{dcases}
		\frac12\mathds{E}_{{\mathcal{C}}^n} \left\|{\rho}_{B^n}^{{\mathcal{C}}^n}  -  {\rho}_{B}^{\otimes n} \right\|_1
		\leq
		\mathrm{e}^{  - n \mathsf{E}^*(R) }, &     R>I(X{\,:\,}B)_\rho \\
		\frac12\mathds{E}_{{\mathcal{C}}^n} \left\|{\rho}_{B^n}^{{\mathcal{C}}^n}  -  {\rho}_{B}^{\otimes n} \right\|_1 \geq 1  -  4\mathrm{e}^{-n \mathsf{E}_\text{sc}^\downarrow (R)}, &     R<I(X{\,:\,}B)_\rho \\
	\end{dcases}	
\end{align}
where the exponents $\mathsf{E}^*(R) := \sup_{\alpha \in (1,2)} \frac{1-\alpha}{\alpha} ( {I}_\alpha^{*} \left( X; B \right)_\rho  -  R )$ and $\mathsf{E}_\text{sc}^\downarrow (R)  := \sup_{\alpha \in (\frac12,1)} \frac{1-\alpha}{\alpha}( I_{2-\sfrac{1}{\alpha}}^{\downarrow}(X{\,:\,}B)_\rho  -  R)$ are defined in terms of the \emph{order-$\alpha$ sandwiched R\'enyi information} and a variant of the Petz-type R\'enyi information (see the detailed definitions in Section~\ref{sec:notation}). Our results hence imply that the quantum mutual information $I(X{\,:\,}B)_\rho$ is the \emph{minimal achievable rate} as well as the \emph{strong converse rate} for quantum soft covering using random i.i.d.~codebook.

Secondly, we consider a \emph{constant composition random codebook} where its codewords are independently sampled uniformly from the type class of $p_X$, whose distribution is 
\begin{align}
\breve{p}_{X^n} (x^n):= \frac{1}{|T_p^n|} \mathbf{1}_{ \{x^n\in T_p^n\}},
\end{align}
and the type class $T_p^n$
 is the set of all sequence $x^n$ with empirical distribution $p_X$. We write
\begin{align}
	\breve{\rho}_{B^n} := \mathcal{N}_{X\to B}^{\otimes n}(\breve{p}_{X^n})
\end{align}
as the corresponding channel output state. We show that, for every $n\in\mathds{N}$ (Theorem~\ref{theo:achievability_composition} and Proposition~\ref{prop:sc_cc}),
\begin{align} \label{eq:cc}
	       \begin{dcases}
		\frac12\mathds{E}_{\breve{\mathcal{C}}^n} \left\|{\rho}_{B^n}^{\breve{\mathcal{C}}^n}  -  \breve{\rho}_{B^n} \right\|_1
		\leq
		\mathrm{e}^{  - n \breve{\mathsf{E}}^*(R) }, &      R>I(X{\,:\,}B)_\rho \\
		\frac12\mathds{E}_{\breve{\mathcal{C}}^n} \left\|{\rho}_{B^n}^{\breve{\mathcal{C}}^n}  -  \breve{\rho}_{B^n} \right\|_1 \geq 1  -  n^{k_p}\mathrm{e}^{-n \breve{\mathsf{E}}_\text{sc}^\downarrow (R)}, &      R<I(X{\,:\,}B)_\rho \\
	\end{dcases}	
\end{align}
where the exponents ${\textstyle\breve{\mathsf{E}}^*(R):=\sup_{\alpha \in (1,2)}\frac{1-\alpha}{\alpha} (\breve{I}_\alpha^{*} (X{\,:\,}B)_\rho-R)}$ and $\breve{\mathsf{E}}_\text{sc}^\downarrow (R):= \sup_{\alpha \in (\sfrac12,1)} \frac{1-\alpha}{\alpha}( \breve{I}_{2-\sfrac{1}{\alpha}}^{\downarrow}(X{\,:\,}B)_\rho - R)$ are defined in terms of the \emph{order-$\alpha$ sandwiched Augustin information} and a variant of the Petz-type Augustin information.
Again, $I(X{\,:\,}B)_\rho$ acts as the fundamental limit for the minimal achievable rate.
However, notably both the exponents when using the random constant composition codebook are larger than that of using the random i.i.d.~ codebook, which indicates a faster convergence.
We remark that both our results established in \eqref{eq:iid} and \eqref{eq:cc} hold for every finite blocklength $n$.

Our result extends to the moderate deviation regime. Namely, as the rate $R$ of both the random codebooks approaches $I(X{\,:\,}B)_\rho$ from above at a speed no faster than $O(\sfrac{1}{\sqrt{n}})$, the trace distances still vanishes asymptotically\footnote{Here, by ``$f(n)\lesssim g(n)$" we meant $\lim_{n\to\infty} \frac{1}{n a_n^2} \log f(n) \leq \lim_{n\to\infty} \frac{1}{n a_n^2} \log g(n) $. See Propositions~\ref{prop:moderate_iid} and \ref{prop:moderate_cc} for the precise statements.}  (Propositions~\ref{prop:moderate_iid} and \ref{prop:moderate_cc}):
\begin{align}
	\frac12\mathds{E}_{{\mathcal{C}}^n} \left\|{\rho}_{B^n}^{{\mathcal{C}}^n}  -  {\rho}_{B}^{\otimes n} \right\|_1 
	&\lesssim \mathrm{e}^{-\frac{na_n^2}{2V(X{\,:\,}B)_\rho}  } \to 0, \quad   R = I(X{\,:\,}B)_\rho + a_n; \\
	\frac12\mathds{E}_{\breve{\mathcal{C}}^n} \left\|{\rho}_{B^n}^{\breve{\mathcal{C}}^n}  -  \breve{\rho}_{B^n} \right\|_1
	&\lesssim \mathrm{e}^{-\frac{na_n^2}{2\breve{V}(X{\,:\,}B)_\rho}  } \to 0, \quad   R = I(X{\,:\,}B)_\rho + a_n. 
\end{align}
Here, $V(X{\,:\,}B)_\rho$ is the quantum information variance, $\breve{V}(X{\,:\,}B)_\rho$ is a variant of it, and $(a_n)_{n\in\mathds{N}}$ is any moderate deviation sequence satisfying $a_n\downarrow 0$ and $n a_n^2 \uparrow \infty$.

Our lower estimates in trace distance can be compared to the covering lemma in \cite[\S 17]{Wilde17} (see also \cite{DW03}), which proves that for any $R>I(X{\,:\,}B)_{\rho}$ and $\delta>0$, the probability of $\varepsilon(\mathcal{C})\ge \delta$ converges to $0$ in probability.
Note that our upper estimates in \eqref{eq:iid} and \eqref{eq:cc} for the trace norm also implies exponential convergence in probability. 

This paper is organized as following.
Section~\ref{sec:notation} presents necessary notation and information quantities.
In Section~\ref{sec:achievability}, we prove the achievability (i.e.~exponential upper bound), and in Section~\ref{sec:sc}, we prove the exponential strong converse.
Section~\ref{sec:moderate} presents moderate deviation analysis.
We conclude this paper in Section`\ref{sec:conclusions}.
Appendix~\ref{sec:interpolation} introduces basics of the complex interpolation theory.

\section{Notation and Information Quantities} \label{sec:notation}
For a Hilbert space $\mathcal{H}$, we denote $\mathcal{B(H)}$ and $\mathcal{B}_{\geq 0}(\mathcal{H})$ the set of bounded linear operators and the set of positive semi-definite operators on $\mathcal{H}$.
The set of density operators $\mathcal{S(H)}$ is positive semi-definite operators with unit trace.
For $p\geq 1$, the Schatten $p$-norm is
	$\left\|M\right\|_{S_p(\mathcal{H}) } = \left( \Tr\left[ | M|^p \right] \right)^{\sfrac1p}$.
The set of bounded linear operators with finite Schatten $p$-norm is denoted as the Schatten $p$-class $S_p(\mathcal{H})$.
We will often shorthand $\|\cdot\|_p \equiv \|\cdot\|_{S_p(\mathcal{H}) }$ if the underlying Hilbert space is clear and there is possibility of confusion.
We use $\texttt{supp}(\cdot)$ to stand for the support of an operator or the support of a function.
We use $\mathds{E}_{x\sim p_X}$ to denote taking expectation with respect to random variable $x$ governed by probability distribution $p_X$, e.g..~
\begin{align}
	\mathds{E}_{x\sim p_X} \left[|x\rangle \langle x|\otimes \rho_B^x \right]
	= \sum_{x\in\mathcal{X}} p_X(x) |x\rangle \langle x|\otimes \rho_B^x = \rho_{XB}.
\end{align}

We define the order-$\alpha$ Petz--R\'enyi divergence $D_\alpha$ \cite{Pet86} and the sandwiched R\'enyi divergence $D^*_\alpha$ \cite{MDS+13,WWY14} for $\rho \in \mathcal{S(H)}$ and $\sigma \in \mathcal{B}_{\geq 0}(\mathcal{H})$ and $\alpha\in(0,\infty)\backslash 1$ as
\begin{align} D_\alpha(\rho\|\sigma)&:=\frac{1}{\al-1}\log\tr\left[\rho^{\alpha}\sigma^{1-\alpha}\right]\pl, \\
D_\alpha^*(\rho\|\sigma)&:=\frac{1}{\al-1}\log\norm{\sigma^{\frac{1-\alpha}{2\alpha}}\rho\si^{\frac{1-\alpha}{2\alpha}}}{\al}^\al.
\end{align}
Note that both R\'enyi divergences converge to the \emph{quantum relative entropy} \cite{Ume62},\cite[Lemma 3.5]{MO14}, i.e.~
\begin{align}
	\lim_{\alpha\to 1} D_\alpha(\rho\|\sigma) = \lim_{\alpha\to 1} D_\alpha^*(\rho\|\sigma) = D(\rho\|\sigma) := \Tr\left[ \rho (\log \rho - \log \sigma ) \right].
\end{align}
We define the \emph{relative entropy variance} $V(\rho\|\sigma)$ is defined by
\[
V(\rho\|\sigma): = \Tr\left[\rho(\log \rho -\log \sigma )^2\right] - \left( D(\rho\|\sigma) \right)^2.
\]

For a classical-quantum (c-q) state $\rho_{XB} = \sum_{x\in\mathcal{X}} p_X(x) |x\rangle \langle x| \otimes \rho_B^x$, we define the \emph{order-$\alpha$ sandwiched R\'enyi information} $I_\alpha^{*} \left( X{\,:\,}B \right)_\rho$ and the \emph{order-$\alpha$ sandwiched Augustin information} $\breve{I}_\alpha^{*} \left( X{\,:\,}B \right)_\rho$ as following:
\begin{align}
	\begin{split} \label{eq:sandwiched_Renyi}
	I_\alpha^{*} \left( X{\,:\,}B \right)_\rho &:= \inf_{\sigma_B\in\mathcal{S}(\mathcal{H}_B)} D_\alpha^*\left( \rho_{XB} \| p_X \otimes \sigma_B \right) \\
	&= \inf_{\sigma_B\in\mathcal{S}(\mathcal{H}_B)} \frac{\alpha}{\alpha-1} \log \left( \sum_{x\in\mathcal{X}} p_X(x) \left\| \sigma_B^{\frac{1-\alpha}{2\alpha} }  \rho_B^x \sigma_B^{\frac{1-\alpha}{2\alpha}} \right\|_\alpha^\alpha \right)^{\frac{1}{\alpha}};
	\end{split} \\
	\begin{split} \label{eq:sandwiched_Augustin}
	\breve{I}_\alpha^{*} \left( X{\,:\,}B \right)_\rho &:= \inf_{\sigma_B\in\mathcal{S}(\mathcal{H}_B)} \sum_{x\in\mathcal{X}} p_X(x) D_\alpha^*\left( \rho_B^x \| \sigma_B \right) \\
	&= \inf_{\sigma_B\in\mathcal{S}(\mathcal{H}_B)} \frac{\alpha}{\alpha-1}  \sum_{x\in\mathcal{X}} p_X(x) \log \left\| \sigma_B^{\frac{1-\alpha}{2\alpha} }  \rho_B^x \sigma_B^{\frac{1-\alpha}{2\alpha}} \right\|_\alpha.
	\end{split}
\end{align}
Here the infimum $\si_B$ is taken over all densities on $B$.
Moreover, we define the following variants of the Petz-type information quantities:
\begin{align}
	I^{\downarrow}_{\alpha}(X{\,:\,}B)_\rho
	&:=  D_\alpha\left( \rho_{XB} \| \rho_X \otimes \rho_B \right); \label{eq:Petz_Renyi_down}\\
	\breve{I}^{\downarrow}_{\alpha}(X{\,:\,}B)_\rho
	&:= \sum_{x\in\mathcal{X}} p_X(x) D_\alpha\left( \rho_{B}^x \|  \rho_B \right). \label{eq:Petz_Augustin_down}
\end{align}
All the fourth information quantities converges to the \emph{ quantum mutual information}, i.e.
\begin{align} \label{eq:alpha1}
	\begin{split}
\lim_{\alpha\to 1} I_\alpha^{*} \left( X{\,:\,}B \right)_\rho
&=\lim_{\alpha\to 1}\breve{I}_\alpha^{*} \left( X{\,:\,}B \right)_\rho
=\lim_{\alpha\to 1}I^{\downarrow}_{\alpha}(X{\,:\,}B)_\rho
=\lim_{\alpha\to 1}\breve{I}^{\downarrow}_{\alpha}(X{\,:\,}B)_\rho \\
&= I(X{\,:\,}B)_\rho
:= D(\rho_{XB}\|\rho_X\otimes \rho_B).
\end{split}
\end{align}
For a c-q state $\rho_{XB} = \sum_{x\in\mathcal{X}} p_X(x) |x\rangle \langle x| \otimes \rho_B^x$, we define the \emph{mutual information variance} $V(X{\,:\,}B)_{\rho}$ and a variant $\breve{V}(X{\,:\,}B)_{\rho}$ as
\begin{align}
V(X{\,:\,}B)_{\rho} &:= V(\rho_{XB} \,\|\,\rho_X\otimes \rho_B); \\
\breve{V}(X{\,:\,}B)_{\rho} &:= \mathds{E}_{x\sim p_X} \left[ V(\rho_B^x\,\|\,\rho_B )\right].
\end{align}

We remark that both the quantities introduced in \eqref{eq:sandwiched_Renyi} and \eqref{eq:sandwiched_Augustin} do not have a closed-form expression. However, an iterative optimization algorithm with convergence guarantees has been proposed to compute them \cite{YCL21}.

\section{Achievability} \label{sec:achievability}

Let 
$\rho_{XB} := \sum_{x\in\mathcal{X}} p_X(x) |x\rangle \langle x| \otimes \rho_B^x$ be a classical-quantum state.
The goal of quantum soft covering is to approximate the marginal state at the channel output, i.e.~$\rho_B = \sum_{x\in\mathcal{X}} p_X(x)  \rho_B^x$, given access to the classical-quantum channel $x\mapsto \rho_B^x$ and sampling from the prior distribution $p_X$.

To that end, we consider a random codebook $\mathcal{C} = \left\{ x(m)  \right\}_{m=1}^M \subseteq \mathcal{X}$ of size $M$, where its codewords $x(1),\cdots,x(m)$ are independently generated according to  $p_X$.
Then, the average state induced by the random codebook~$\mathcal{C}$ is:
\begin{align}
	{\rho}_B^{\mathcal{C}}  := \frac1M \sum_{x\in\mathcal{C}} \rho_B^{x}.
\end{align}
Hence, we take the expected value (over the random codebook $\mathcal{C}$) of the trace distance between the codebook-induced state ${\rho}_B^{\mathcal{C}}$ and the true marginal state $\rho_B$ as the figure of merit:
\begin{align}
	\frac12 \mathds{E}_{\mathcal{C}} \left\| {\rho}_B^{\mathcal{C}} - \rho_B \right\|_1.
\end{align}
The main result of this section is to prove the following upper bound on the trace distance when the codebook size $M$ is fixed.
\begin{theo}[A one-shot achievability via R\'enyi Information] \label{theo:achievability_iid_oneshot}
	The trace distance between the induced state $\rho_{B}^{\mathcal{C}}$ and the true state $\rho_B$ is upper bounded by
	\begin{align}
		\frac12\mathds{E}_\mathcal{C} \left\|\rho_{B}^{\mathcal{C}} - \rho_B \right\|_1
		\leq 2^{\frac{2}{\alpha}-2}
		\e^{ \frac{\alpha-1}{\alpha } \left( I_\alpha^{*} \left(X{\,:\,}B \right)_\rho - \log M \right)}, \quad \al\in (1,2). 
	\end{align}
	Here, the order-$\alpha$ sandwiched R\'enyi information $I_\alpha^{*} \left(X{\,:\,}B \right)_\rho$ is defined in \eqref{eq:sandwiched_Renyi}.
\end{theo}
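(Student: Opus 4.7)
The plan is to combine noncommutative H\"older's inequality, a complex interpolation, and a second-moment estimate from independence. To begin, I would fix an arbitrary reference state $\sigma_B \in \mathcal{S}(\mathcal{H}_B)$ and set $T^x := \sigma_B^{(1-\alpha)/(2\alpha)}\rho_B^x\sigma_B^{(1-\alpha)/(2\alpha)}$, $T := \sum_x p_X(x)T^x$, and $T^{\mathcal{C}} := \frac{1}{M}\sum_{x\in\mathcal{C}}T^x$. Noncommutative H\"older's inequality (absorbing the weights $\sigma_B^{\pm(\alpha-1)/(2\alpha)}$ using $\Tr[\sigma_B]=1$) yields the pointwise bound
\begin{align*}
\|\rho_B^{\mathcal{C}} - \rho_B\|_1 \leq \|T^{\mathcal{C}} - T\|_\alpha.
\end{align*}
The advantage of this reduction is that $\sum_x p_X(x)\|T^x\|_\alpha^\alpha$ equals $\mathrm{e}^{(\alpha-1)D_\alpha^*(\rho_{XB}\|p_X\otimes\sigma_B)}$, which becomes $\mathrm{e}^{(\alpha-1)I_\alpha^*(X{\,:\,}B)_\rho}$ after infimizing over $\sigma_B$.

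Next, to control $\mathds{E}_\mathcal{C}\|T^{\mathcal{C}} - T\|_\alpha$, I would apply a complex-interpolation bound for each fixed codebook $\mathcal{C}$. Consider the operator-valued analytic function $F(z) := \sigma_B^{-z/4}(\rho_B^{\mathcal{C}} - \rho_B)\sigma_B^{-z/4}$ on the strip $\{0 \leq \mathrm{Re}(z) \leq 1\}$. Setting $\theta := 2 - 2/\alpha \in (0,1)$ makes $F(\theta) = T^{\mathcal{C}} - T$, while $F(0) = \rho_B^{\mathcal{C}} - \rho_B$ and $F(1) = \sigma_B^{-1/4}(\rho_B^{\mathcal{C}} - \rho_B)\sigma_B^{-1/4}$. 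Since $\sigma_B^{-\mathrm{i}t/4}$ is unitary for real $t$, the Schatten norms along the boundary lines depend only on $\mathrm{Re}(z)$, so the Stein--Hirschman complex interpolation yields
\begin{align*}
\|T^{\mathcal{C}} - T\|_\alpha \leq \|\rho_B^{\mathcal{C}} - \rho_B\|_1^{1-\theta}\bigl\|\sigma_B^{-1/4}(\rho_B^{\mathcal{C}} - \rho_B)\sigma_B^{-1/4}\bigr\|_2^\theta.
\end{align*}

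I would then take expectation and apply H\"older's inequality in the codebook space with conjugate exponents $1/(1-\theta)$ and $1/\theta$ to decouple the mixed moment. The $L^1$ factor satisfies $\mathds{E}_\mathcal{C}\|\rho_B^{\mathcal{C}} - \rho_B\|_1 \leq 2$ by the triangle inequality and $\Tr[\rho_B^{\mathcal{C}}] = \Tr[\rho_B] = 1$. The $L^2$ factor is handled by Jensen's inequality followed by the orthogonality of the iid mean-zero operators $\sigma_B^{-1/4}(\rho_B^{x(i)} - \rho_B)\sigma_B^{-1/4}$ (cross terms vanish in expectation), giving $\mathds{E}_\mathcal{C}\|\sigma_B^{-1/4}(\rho_B^{\mathcal{C}} - \rho_B)\sigma_B^{-1/4}\|_2 \leq M^{-1/2}(\mathds{E}_x\|\sigma_B^{-1/4}\rho_B^x\sigma_B^{-1/4}\|_2^2)^{1/2}$. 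Substituting $\theta = 2 - 2/\alpha$ and multiplying by $\frac{1}{2}$, the prefactors collect into $2^{2/\alpha - 2}$ and the $M$-dependence into $M^{(1-\alpha)/\alpha}$; a final infimum over $\sigma_B$ produces the sandwiched R\'enyi information in the exponent.

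The main obstacle is that the naive execution of this plan produces the $\alpha=2$ quantity $\mathds{E}_x\|\sigma_B^{-1/4}\rho_B^x\sigma_B^{-1/4}\|_2^2 = \mathrm{e}^{D_2^*(\rho_{XB}\|p_X\otimes\sigma_B)}$ on the right-hand side rather than the $D_\alpha^*$-flavored quantity needed to identify $I_\alpha^*$; infimizing $\sigma_B$ in such a bound would yield only $I_2^* \geq I_\alpha^*$. Overcoming this is the crux of the argument: one either runs the complex interpolation within the Kosaki $L^p_{\sigma_B}$ scale, so that the sandwich exponent $(1-p)/(2p)$ moves analytically with the Schatten index and the intermediate norm is already the sandwiched $L^\alpha_{\sigma_B}$ norm, or one invokes a noncommutative Marcinkiewicz--Zygmund / Rosenthal-type inequality for iid positive operators in $L^\alpha$ to produce $\mathds{E}_\mathcal{C}\|T^{\mathcal{C}} - T\|_\alpha^\alpha \leq 2^{2-\alpha}M^{1-\alpha}\mathds{E}_x\|T^x\|_\alpha^\alpha$ directly, with sharp constant $2^{2-\alpha}$; either route makes the sandwiched R\'enyi information $I_\alpha^*$ appear in the final exponent.
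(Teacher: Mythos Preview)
Your diagnosis of the obstacle is accurate, and your second proposed fix is exactly the paper's approach --- but the organizational point that makes it work is to interpolate the \emph{map}, not the realization. The paper views the c-q state as an operator-valued function $\rho_{\Omega B}\in L_\infty(\Omega,\mathcal{B(H)})$ on a probability space $(\Omega,\mu)$ and introduces the linear map $\Theta: f\mapsto \frac{1}{M}\sum_{i=1}^M \pi_i(f)-\mathbb{E}_\mu f$ from $L_p(\Omega,S_p(\mathcal{H}))$ to $L_p(\Omega^M,S_p(\mathcal{H}))$, so that $\|\Theta(\rho_{\Omega B})\|_{L_1(\Omega^M,S_1)}$ is literally $\mathds{E}_\mathcal{C}\|\rho_B^{\mathcal{C}}-\rho_B\|_1$. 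Your triangle-inequality estimate is then $\|\Theta\|_{L_1\to L_1}\le 2$ and your variance/orthogonality estimate is $\|\Theta\|_{L_2\to L_2}\le M^{-1/2}$; Riesz--Thorin applied to $\Theta$ yields $\|\Theta\|_{L_\alpha\to L_\alpha}\le 2^{2/\alpha-1}M^{(1-\alpha)/\alpha}$, which is precisely the Marcinkiewicz--Zygmund--type bound $\mathds{E}_\mathcal{C}\|T^{\mathcal{C}}-T\|_\alpha^\alpha\le 2^{2-\alpha}M^{1-\alpha}\,\mathds{E}_x\|T^x\|_\alpha^\alpha$ you named, obtained with no appeal to an external Rosenthal inequality.

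The remaining step is your first one. H\"older gives $\|\Theta(\rho_{\Omega B})\|_{L_1(\Omega^M,S_1)}\le\|\sigma^{-1/(2\alpha')}\Theta(\rho_{\Omega B})\sigma^{-1/(2\alpha')}\|_{L_\alpha(\Omega^M,S_\alpha)}$, and since $\Theta$ acts as the identity on the quantum register it commutes with the sandwich, so this equals $\|\Theta(\sigma^{-1/(2\alpha')}\rho_{\Omega B}\sigma^{-1/(2\alpha')})\|_{L_\alpha}$. Applying the $L_\alpha\to L_\alpha$ operator-norm bound, the input norm is already $\bigl(\sum_x p_X(x)\|\sigma^{-1/(2\alpha')}\rho_B^x\sigma^{-1/(2\alpha')}\|_\alpha^\alpha\bigr)^{1/\alpha}=\e^{\frac{\alpha-1}{\alpha}D_\alpha^*(\rho_{XB}\|p_X\otimes\sigma)}$, and infimizing over $\sigma$ produces $I_\alpha^*$. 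The difficulty in your naive route --- interpolating for a fixed $\mathcal{C}$ lands $D_2^*$ at the right endpoint, after which averaging and H\"older in the codebook variable cannot recover $D_\alpha^*$ --- simply never arises: the averaging is built into the $L_p(\Omega^M)$ norm from the outset, the interpolation is on $\Theta$ between unweighted $S_p$ spaces, and the sandwich is inserted only once via H\"older after the interpolation is done.
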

This one-shot achievability bound applies to the $n$-shot scenario when using a \emph{random independent and identically distributed (i.i.d.) codebook} (Section~\ref{sec:achievability_iid})
and a \emph{random constant composition codebook} (Section~\ref{sec:achievability_cc}).

To prove achievability on random codebook, we start with a lemma to exploit the independence between random codewords. Let $L_\infty(\Omega,\mu)$ be a probability space. For $M\ge 1$, we write $\Omega^{ M}={\Omega \times \cdots\times\Omega}
$ for an $M$-fold product space of $\Omega$.  For $1\le i\le M$, we define the following maps:
\begin{align*}&\pi_i: L_\infty(\Omega,\mu)\to L_\infty\left(\Omega^{ M},\mu\right) \pl, \pl \\
	&\pi_i(f)(\omega_1,\dots,\omega_M)=f(\omega_i) \pl, \pl \\
	&E: L_\infty(\Omega,\mu)\to L_\infty(\Omega^{ M},\mu)\pl, \pl\\
	&E(f)(\omega_1,\dots,\omega_M)=\int_{\Omega} f(\omega) \d\mu(\omega)=:\mathbb{E}_\mu(f)  \pl, \pl \\
	&\Theta:= \frac{}{}: L_\infty(\Omega,\mu)\to L_\infty\left(\Omega^{ M},\mu\right)\pl, \pl \\
	&\Theta(f)=\frac{1}{M}\sum_{i=1}^M \pi_i(f)-E(f),
\end{align*}
where $(\omega_1,\cdots, \omega_M)\in \Omega^{\times M}$.
Here, $\pi_i$ is an embedding such that $\pi_i(f)$ only depends on the $i$-th coordinate $\omega_i$ via $f$, and $\mathbb{E}$ sends $f$ to the constant function of its mean $\mathbb{E}_\mu(f)$. It is clear to see that $\pi_i(f)$ forms an i.i.d.~copy of distribution of $f$. Our key lemma in achievability is to upper bounds the norm of the operation $\Theta$ on operator-valued functions.
\begin{lemm}\label{lemm:inter}Let $\Theta$ be the map defined above. Then, for any Hilbert space $\cH$ and $1\le p\le 2$, we have
	\[\norm{\Theta\ten \id: L_p(\Omega, S_p(\cH))\to L_p\left(\Omega^{M}, S_p(\cH)\right)}{}\le 2^{\frac{2}{p}-1}M^{\frac{1-p}{p}},  \]
	where the identity map $\id$ is acting on $S_p(\cH)$.
\end{lemm}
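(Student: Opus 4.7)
The plan is to prove the bound at the two endpoints $p=1$ and $p=2$ and then obtain the intermediate range by complex interpolation, since $L_p(\Omega,S_p(\cH))$ is itself a noncommutative $L_p$ space over the semifinite von Neumann algebra $L_\infty(\Omega)\bar{\otimes} B(\cH)$ with trace $\int \tr(\cdot)\,\d\mu$, so the couple $(L_1(\Omega,S_1(\cH)),L_2(\Omega,S_2(\cH)))$ complex-interpolates to $L_p(\Omega,S_p(\cH))$ with $1/p = 1-\theta/2$, i.e.\ $\theta = 2-2/p$.

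For the endpoint $p=1$, I would apply the triangle inequality to $\Theta(f)=\frac{1}{M}\sum_i \pi_i(f)-E(f)$. Each embedding $\pi_i$ is an $L_1$-isometry because the measure on $\Omega^{M}$ is a product probability measure, so $\|\pi_i(f)\|_{L_1(\Omega^M,S_1)} = \|f\|_{L_1(\Omega,S_1)}$. For $E(f)$, Jensen's inequality in $S_1$ gives $\|E(f)\|_{L_1(\Omega^M,S_1)} = \|\mathbb{E}_\mu(f)\|_{S_1}\le \mathbb{E}_\mu\|f\|_{S_1} = \|f\|_{L_1}$. Combined, $\|\Theta(f)\|_{L_1}\le 2\|f\|_{L_1}$, matching the claimed bound $2^{2/1-1}M^0=2$.

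For the endpoint $p=2$, the key is the pairwise independence of the random variables $\pi_i(f)-E(f)$, which have mean zero. Writing $\bar f := \mathbb{E}_\mu(f)$ and expanding the $S_2$-inner product, the cross terms vanish by independence: for $i\ne j$, $\int \tr\bigl[(\pi_i(f)-\bar f)^*(\pi_j(f)-\bar f)\bigr]\,\d\mu^M = \tr\bigl[(\mathbb{E}_\mu f -\bar f)^*(\mathbb{E}_\mu f-\bar f)\bigr]=0$. Only the diagonal survives, yielding $\|\Theta(f)\|_{L_2}^2 = \frac{1}{M^2}\sum_i \int \|f(\omega)-\bar f\|_{S_2}^2\,\d\mu(\omega) = \frac{1}{M}\bigl(\|f\|_{L_2}^2-\|\bar f\|_{S_2}^2\bigr)\le M^{-1}\|f\|_{L_2}^2$, which gives $\|\Theta(f)\|_{L_2}\le M^{-1/2}\|f\|_{L_2}$, matching $2^{0}M^{-1/2}$.

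The final step applies the complex interpolation theorem (the vector-valued Riesz--Thorin theorem / Stein interpolation in the noncommutative $L_p$ setting, as reviewed in Appendix~\ref{sec:interpolation}) to the bounded map $\Theta\otimes\id$ viewed between the interpolation couples. Since $\Theta$ is a fixed linear map independent of $p$ and does not act on the Schatten factor, the complex interpolation norm estimate yields $\|\Theta\otimes \id\|_{L_p\to L_p} \le (2)^{1-\theta}\cdot (M^{-1/2})^{\theta} = 2^{2/p-1}\,M^{(1-p)/p}$, as claimed. The main subtlety is justifying that the interpolation space between $L_1(\Omega,S_1(\cH))$ and $L_2(\Omega,S_2(\cH))$ is indeed $L_p(\Omega,S_p(\cH))$ with the expected isometric identification; this is standard in noncommutative $L_p$-theory since both spaces are noncommutative $L_p$'s over the same semifinite trace, and follows from Kosaki's complex interpolation identification (or directly from Stein's theorem applied to operator-valued functions). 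No other step should pose serious difficulty.
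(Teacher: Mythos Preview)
Your proposal is correct and follows essentially the same route as the paper: triangle inequality at $p=1$ (using that each $\pi_i$ is an $L_1$-isometry and $E$ is a contraction), orthogonality of the mean-zero pieces $\pi_i(f)-E(f)$ at $p=2$, and then the Riesz--Thorin complex interpolation between $L_1$ and $L_2$ with $\theta=2(p-1)/p$. The only cosmetic difference is that you spell out the noncommutative $L_p$ interpolation identification a bit more explicitly than the paper, which simply cites \cite{BL76}.
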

\begin{proof}
	For any $p \ge 1$, it is clear that for each $i$, $\pi_i$ gives an isometry on $L_p$-spaces
	\begin{align*}
		&\norm{\pi_i\ten \id (f)}{L_p(\Omega^{\times M}, S_p(\cH))}^p\\
		&=
		\int_{\Omega^{M} }\norm{\pi_i\ten \id (f)(\omega_1,\cdots ,\omega_M)}{p}^p \d\mu(\omega_1) \cdots  \d\mu(\omega_M)\\
		&=
		\int_{\Omega }\norm{f(\omega_i)}{p}^p   \d\mu(\omega_i)
		=\norm{f}{L_p(\Omega, S_p(\cH))}^p.
	\end{align*}
	Moreover, by convexity of $S_p(\cH)$ norm, the map $E$ is a contraction, i.e.~
	\begin{align*}  &\norm{E(f)}{L_p\left(\Omega^{ M}, S_p(\cH)\right)}=\norm{\mathbb{E}_{\mu}(f)}{ S_p(\cH)}\\
		&\le  \int_{\Omega}\norm{f(\omega)}{p}\d\mu(\omega)\le \Big(\int_{\Omega}\norm{f(\omega)}{p}^p\d\mu(\omega)\Big)^{\sfrac{1}{p}}\pl,
	\end{align*}
	where in the last step we used Jensen's inequality and H\"older inequality on a probability space.
	Then by triangle inequality,  we have for $p=1$,
	\[ \norm{\Theta\ten \id: L_1(\Omega, S_1(\cH))\to L_1(\Omega^{M}, S_1(\cH))}{}\le 2\pl.\]
	For $p=2$, we consider
	\[ \Theta\ten \id(f)=\frac{1}{M}\sum_{i=1}^M (\pi_i(f)- E(f))=\frac{1}{M}\sum_{i=1}^M \hat{f}_i\pl,\]
	where $\hat{f}_i= \pi_i(f)- E(f)$ is the mean zero part of $\pi_i(f)$.
	It then follows from independence that $\hat{f}_i$ are mutually orthogonal (even in the operator-valued inner product). Indeed, for $i\neq j$,
	\begin{align*}\mathbb{E}(\hat{f}_i^* \hat{f}_j)=&\left(\mathbb{E} (\pi_i(f)-E(f))\right)^* \left(\mathbb{E}(\pi_j(f)-E(f))\right)\\
		=&|\mathbb{E}_{\mu}f-\mathbb{E}_{\mu}f|^2=0.\end{align*}
	This further implies that $\hat{f}_i$ are orthogonal in the Hilbert space $L_2(\Omega^{ M}, S_2(\cH))$. Note that for each $i$,
	\begin{align} \norm{\pi_i(f)-E(f)}{L_2(\Omega^{ M}, S_2(\cH))}
		&=\norm{\pi_i(f-\mathbb{E}_\mu f)}{L_2(\Omega^{ M}, S_2(\cH))}\\
		&=\norm{f-\mathbb{E}_\mu f}{L_2(\Omega, S_2(\cH))}\\
		&\le \norm{f}{L_2(\Omega, S_2(\cH))},
	\end{align}
	where the last inequality follows from the fact that $ f\to \mathbb{E}_\mu f$ is the projection from $L_2(\Omega, S_2(\cH))$ onto the (operator-valued) constant function. Thus, we have
	\begin{align*}
		\norm{\Theta\ten \id(f)}{2}^2&= \left\|\frac{1}{M}\sum_{i=1}^M \hat{f}_i\right\|_2^{2} \\
		&= \frac{1}{M^2}\sum_{i,j=1}^M \langle \hat{f}_i,\hat{f}_j \rangle\\
		&= \frac{1}{M^2}\sum_{i=1}^M \left\|\hat{f}_i\right\|^{2} \\
		&\le \frac{1}{M} \norm{f}{2}.
	\end{align*}
	This means that, for $p=2$,
	\[ \norm{\Theta\ten \id: L_2(\Omega, S_2(\cH))\to L_2\left(\Omega^{ M}, S_2(\cH)\right)}\le \frac{1}{\sqrt{M}}\pl.\]
	The case of general $1\le p\le 2$ follows from interpolation (see e.g. \cite{BL76}) with $\theta = \frac{2(p-1)}{p} \in [0,1]$.
\end{proof}
Since the above estimate does not depends on the dimension of $\mathcal{B(H)}$, in the following we  write $\Theta$ for $\Theta\ten \id_{\mathcal{B(H)}}$ if no confusion. Similarly, we omit the notation of $\id_{\mathcal{B(H)}}$ for the maps $\pi_i$ and $E$.
We are now ready to prove the one-shot achievability bound.

\begin{proof}[Proof of Theorem~\ref{theo:achievability_iid_oneshot}]
	For the ease of notation,
	we write $\mathcal{H}\equiv \mathcal{H}_B$,  $\rho_x\equiv\rho_B^{x}$ and $\rho_B\equiv \sum_{x}p_X(x)\rho_x$ throughout the proof.
	Let $x: \Omega\to \mathcal{X}$ be a random codeword with respect to the distribution $p_X$, where $\Omega$ is the event space. We can rewrite the classical-quantum state $\rho_{XB}$ as
	\[\rho_{\Omega B}= \sum_{x} 1_{A_{x}}\ten \rho_x \in L_\infty(\Omega, \mathcal{B(H)})\pl,\]
	where $1_{A_{x}}$ is the characteristic function on the mutually disjoint set $A_x$, which satisfying  $\Pr(A_x)=p_X(x)$ and $\sum_{x\in\mathcal{X}}\Pr(A_x)=\sum_{x\in\mathcal{X}} p_X(x)=1$. In particular, we have $\mathds{E}_{\Omega}\rho_{\Omega B}=\rho_B$.
	Take $1< \al< 2$ and $\frac{1}{\al}+\frac{1}{\al'}=1$. The R\'enyi information can be expressed as
	\begin{align*} I_\alpha^{*} \left( X{\,:\,}B \right)_\rho & =   \inf_{\sigma} \frac{1}{\alpha - 1} \log   \left( \sum_{x\in\mathcal{X}} p_X(x) \left\| \sigma^{-\frac{1}{2\al'} }  \rho_x\sigma^{-\frac{1}{2\al'} } \right\|_\alpha^\alpha \right)\\
		& =  \inf_{\sigma\in\mathcal{S}(\mathcal{H})} \frac{\alpha}{\alpha - 1} \log \norm{\sigma_{\Omega B}^{-\frac{1}{2\al'} }\rho_{\Omega B}  \sigma_{\Omega B}^{-\frac{1}{2\al'} }}{L_\al(\Omega,S_{\al}(\cH))}
	\end{align*}
	where $\sigma_{\Omega B}:=1_\Omega \ten \sigma$ is interpreted as a constant function on $L_\infty(\Omega,\mathcal{B}(\cH))$. In other words,
	\begin{align*}\e^{ \frac{\alpha-1}{\alpha } I_\alpha^{*} \left(X{\,:\,}B \right)_\rho}=\inf_{\sigma\in\mathcal{S}(\mathcal{H})} \norm{\sigma_{\Omega B}^{-\frac{1}{2\al'} }\rho_{\Omega B}  \sigma_{\Omega B}^{-\frac{1}{2\al'} }}{L_\al(\Omega,S_{\al}(\cH))}\pl.
	\end{align*}
	Now using the construction in the Lemma~\ref{lemm:inter}, we have
	\begin{align*}\mathds{E}_\mathcal{C} \left\|\rho_{B}^{\mathcal{C}} - \rho_B \right\|_1
		&=\mathds{E}_\mathcal{C} \left\|\frac1M \sum_{x\in\mathcal{C}} \rho_x - \rho_B \right\|_1\\
		&= \norm{\Theta (\rho_{\Omega B})}{L_1(\Omega^{ M},S_1(\mathcal{H}))}.
	\end{align*}
	Note that for any state $\sigma\in\mathcal{S}(\mathcal{H})$,
	\begin{align*}
		\norm{\Theta (\rho_{\Omega B})}{L_1(\Omega^{\times M},S_1(\mathcal{H}))}
		&\overset{\text{(a)}}{\le} \;\norm{\sigma^{-\frac{1}{2\al'} }\Theta (\rho_{\Omega B})\sigma^{-\frac{1}{2\al'} }}{L_\al(\Omega^{ M},S_\al(\mathcal{H}))}
		\\
		&\overset{\text{(b)}}{=} \; \norm{\Theta (\sigma_{\Omega B}^{-\frac{1}{2\al'} }\rho_{\Omega B}\sigma_{\Omega B}^{-\frac{1}{2\al'} })}{L_\al(\Omega^{ M},S_\al(\mathcal{H}))}
		\\
		&\le \;\norm{\Theta: L_\al(\Omega,S_\al(\mathcal{H}))\to L_\al(\Omega^{ M},S_\al(\mathcal{H}))}{} \cdot \norm{ \sigma_{\Omega B}^{-\frac{1}{2\al'} }\rho_{\Omega B}\sigma_{\Omega B}^{-\frac{1}{2\al'} }}{L_\al(\Omega^{ M},S_\al(\mathcal{H}))}.
	\end{align*}
	Here, (b) used the fact $\Theta= \Theta \ten \id_{\mathcal{B(H)}}$ is identity on the operator part , and (a) used H\"older inequality ${\norm{AXB}{1}\le \norm{A}{2\al'}\norm{X}{\al}\norm{B}{2\al'}}$ with
	\[\norm{ \sigma^{\frac{1}{2\al'} }}{L_\al(\Omega^{ M},S_\al(\mathcal{H}))}=\int_{\Omega^{\times M}} \norm{\sigma^{\frac{1}{2\al'} }}{2\al'}^{2\al'}d\mu^{ M}=1\]
	Then the assertion follows from Lemma~\ref{lemm:inter} and taking infimum over $\sigma \in \mathcal{S}(\mathcal{H})$, i.e.~
	\begin{align*}\mathds{E}_\mathcal{C} \left\|\rho_{B}^{\mathcal{C}} - \rho_B \right\|_1
		&\le  2^{\frac{2}{\alpha}-1}M^{\frac{1-\alpha}{\alpha}} \cdot  \inf_{\sigma\in\mathcal{S}(\mathcal{H})}  \norm{ \sigma_{\Omega B}^{-\frac{1}{2\al'} }\rho_{\Omega B}\sigma_{\Omega B}^{-\frac{1}{2\al'} }}{L_\al(\Omega^{\times M},S_\al(\mathcal{H}))}\\
		&= 2^{\frac{2}{\alpha}-1}M^{\frac{1-\alpha}{\alpha}}\e^{ \frac{\alpha-1}{\alpha } I_\alpha^{*} \left(X{\,:\,}B \right)_\rho}.
	\end{align*}
	That finishes the proof.
\end{proof}

\noindent\subsection{Random I.I.D.~Codebook} \label{sec:achievability_iid}
We consider the $n$-shot extension of quantum soft covering, where the c-q channel is now $n$-fold product:
\begin{align}
	x^n \mapsto \rho_{B^n}^{x^n} := \rho_B^{x_1}\otimes \cdots\otimes \rho_B^{x_n}, \quad \forall x^n \in \mathcal{X}^n.
\end{align}

In this section, we investigate the case that input distribution is i.i.d., i.e.~$p_{X^n} = p_X^{\otimes n}$.
Hence, the joint c-q state is the $n$-fold product state $\rho_{X^n B^n} = \rho_{XB}^{\otimes n}$ and our targeted true marginal state is $\rho_B^{\otimes n}$.

With the i.i.d.~prior $p_X^{\otimes n}$,
we use the codebook $\mathcal{C}^n$ with size $|\mathcal{C}^n| = \exp(nR)$, where
each codeword in $\mathcal{C}^n$ is i.i.d.~drawn according to $p_X^{\otimes n}$.
We term this the \emph{random i.i.d.~codebook}.
Now the goal is to use the codebook-induced state $
	\rho_{B^n}^{\mathcal{C}^n}$ to approximate $\rho_B^{\otimes n}$.

We apply the one-shot achievability established in Theorem~\ref{theo:achievability_iid_oneshot} to show that the expected value of the trace distance between the induced state ${\rho}_{B^n}^{{\mathcal{C}}^n}$ and the true marginal state ${\rho}_{B}^{\otimes n}$ decays exponentially fast.
\begin{prop}[$n$-shot achievability using random i.i.d.~codebook] \label{prop:achievability_iid}
	For any $n\in\mathbb{N}$, let $R= \frac1n \log |{\mathcal{C}}^n|$.
	Then,
	\begin{align}
		\frac12\mathds{E}_{{\mathcal{C}}^n} \left\|{\rho}_{B^n}^{{\mathcal{C}}^n} - {\rho}_{B}^{\otimes n} \right\|_1
		\leq
		\mathrm{e}^{  - n  \sup_{\alpha \in (1,2)} \frac{1-\alpha}{\alpha} \left( {I}_\alpha^{*} \left( X{\,:\,}B \right)_\rho - R \right) }.
	\end{align}
	Moreover, the exponent $\sup_{\alpha \in (1,2)} \frac{1-\alpha}{\alpha} \left( {I}_\alpha^{*} \left( X{\,:\,}B \right)_\rho - R \right)$ is positive if and only if $R> I(X{\,:\,}B)_\rho$.
\end{prop}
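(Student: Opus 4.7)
The plan is to specialize the one-shot bound in Theorem~\ref{theo:achievability_iid_oneshot} to the product c-q channel $x^n\mapsto \rho_B^{x_1}\ten\cdots\ten\rho_B^{x_n}$ with i.i.d.~input distribution $p_X^{\ten n}$ and codebook size $M=\mathrm{e}^{nR}$. This directly yields, for every $\alpha\in(1,2)$,
\begin{align*}
\frac12\mathds{E}_{\mathcal{C}^n}\left\|\rho_{B^n}^{\mathcal{C}^n}-\rho_B^{\ten n}\right\|_1 \le 2^{\sfrac{2}{\alpha}-2}\,\mathrm{e}^{\frac{\alpha-1}{\alpha}\bigl(I_\alpha^{*}(X^n{\,:\,}B^n)_{\rho^{\ten n}}-nR\bigr)}.
\end{align*}
The next step is to tensorize the sandwiched R\'enyi information. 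Using the trial state $\sigma_{B^n}=\sigma_B^{\ten n}$ in \eqref{eq:sandwiched_Renyi} together with the standard tensorization identity $D_\alpha^{*}(\rho_1\ten\rho_2\,\|\,\sigma_1\ten\sigma_2)=D_\alpha^{*}(\rho_1\|\sigma_1)+D_\alpha^{*}(\rho_2\|\sigma_2)$, I obtain the one-sided bound $I_\alpha^{*}(X^n{\,:\,}B^n)_{\rho^{\ten n}}\le n\,I_\alpha^{*}(X{\,:\,}B)_\rho$ (the reverse inequality is not needed). Since $2^{\sfrac{2}{\alpha}-2}\le 1$ for $\alpha\in(1,2)$, this reduces the previous display to $\mathrm{e}^{n\frac{\alpha-1}{\alpha}(I_\alpha^{*}(X{\,:\,}B)_\rho-R)}$. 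Taking the infimum over $\alpha\in(1,2)$ and rewriting $\inf_\alpha \frac{\alpha-1}{\alpha}(I_\alpha^{*}-R)=-\sup_\alpha \frac{1-\alpha}{\alpha}(I_\alpha^{*}-R)$ yields the stated exponential upper bound.

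For the positivity claim I would use two standard properties of $\alpha\mapsto I_\alpha^{*}(X{\,:\,}B)_\rho$: continuity with limit $I(X{\,:\,}B)_\rho$ as $\alpha\to 1$, as recorded in \eqref{eq:alpha1}, and non-decreasingness in $\alpha$, which is inherited from the corresponding monotonicity of $D_\alpha^{*}$ by taking the infimum over $\sigma_B$. If $R>I(X{\,:\,}B)_\rho$, continuity supplies some $\alpha_*\in(1,2)$ with $I_{\alpha_*}^{*}<R$, so $\frac{1-\alpha_*}{\alpha_*}(I_{\alpha_*}^{*}-R)>0$ as a product of two negative factors, making the supremum strictly positive. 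Conversely, if $R\le I(X{\,:\,}B)_\rho$, then monotonicity yields $I_\alpha^{*}\ge I(X{\,:\,}B)_\rho\ge R$ for every $\alpha>1$, so each $\frac{1-\alpha}{\alpha}(I_\alpha^{*}-R)\le 0$ and the supremum is non-positive.

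The main obstacle is essentially absent here: Theorem~\ref{theo:achievability_iid_oneshot} does all the heavy lifting, and the proposition follows by combining it with the easy upper tensorization $I_\alpha^{*}(X^n{\,:\,}B^n)_{\rho^{\ten n}}\le n\,I_\alpha^{*}(X{\,:\,}B)_\rho$ (via a product trial state) and the continuity/monotonicity properties of $I_\alpha^{*}$ inspected directly from \eqref{eq:sandwiched_Renyi}.
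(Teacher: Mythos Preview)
Your proposal is correct and follows essentially the same route as the paper: apply the one-shot bound of Theorem~\ref{theo:achievability_iid_oneshot} to the $n$-fold product, tensorize $I_\alpha^{*}$, drop the factor $2^{\sfrac{2}{\alpha}-2}\le 1$, and optimize over $\alpha$; the positivity argument via monotonicity of $\alpha\mapsto I_\alpha^{*}$ together with \eqref{eq:alpha1} is also the paper's. The only cosmetic difference is that the paper invokes the full additivity $I_\alpha^{*}(X^n{\,:\,}B^n)_{\rho^{\otimes n}}=n\,I_\alpha^{*}(X{\,:\,}B)_\rho$ from \cite[Lemma~4.8]{MO14}, whereas you use only the easy inequality $\le$ via a product trial state---which is indeed all that is needed here.
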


\begin{proof}
	Recall the additivity of sandwiched R\'enyi information \cite[Lemma 4.8]{MO14}:
	\begin{align} \label{eq:additivity}
		I_\alpha^{*} \left( X^n; B^n \right)_\rho = n I_\alpha^{*} \left( X{\,:\,}B \right)_\rho.
	\end{align}
	The exponential upper bound is a direct consequence of the one-shot achievability proved in Theorem~\ref{theo:achievability_iid_oneshot} and \eqref{eq:additivity}.
	The positivity follows from the monotone non-decreasing map $\alpha\mapsto I_\alpha^*$ \cite[Proposition~4]{CGH18} and \eqref{eq:alpha1}.
\end{proof}

\subsection{Random Constant Composition Codebook} \label{sec:achievability_cc}
Our Theorem \ref{theo:achievability_iid_oneshot} also applies to achievability using random constant composition codebook as well. Let $\rho_{XB} = \sum_{x\in\mathcal{X}} p_X(x) |x\rangle\langle x|\otimes \rho_B^x$ be a c-q state. Fix $n$ such that $n p_X(x) \in 0\cup\mathbb{N}$ for all $x\in\mathcal{X}$.
We define the type class of length-$n$ sequences under $p_X$ as
\begin{align}
	T_p^n := \set{ x^n \in \mathcal{X}^n : P_{x^n} = p_X	},
\end{align}
where the empirical distribution of sequence $x^n\in\mathcal{X}^n$ is
\begin{align} \label{eq:empirical}
	P_{{x}^n} (x) := \frac1n \sum_{i=1}^n \mathbf{1}_{\{ x = x_i\}}, \quad \forall x\in\mathcal{X}.
\end{align}
Then, we define a uniform distribution on the type class as:
\begin{align}
	\breve{p}_{X^n}(x^n):=\frac{1}{\left|T_{p}^n\right|} \mathbf{1}_{x^n \in T_{p}^n}, \quad x^n \in \mathcal{X}^n.
\end{align}
A \emph{random constant composition codebook}
\begin{align} \label{eq:composition_codebook}
	\breve{\mathcal{C}}^n:=\{x^n(m)\}_{m=1}^M, \quad x^n(m)\sim \breve{p}_{X^n}(x^n)
\end{align}
consists of $M$ codewords, where each codeword $x^n(m)$ is independently drawn according to distribution $\breve{p}_{X^n}$. For a c-q channel $x\mapsto \rho_B^x$ (induced by the c-q state $\rho_{XB}$), we define the c-q state generated from $\breve{p}_{X^n}$ as
\begin{align}
\breve{\rho}_{X^nB^n} := \sum_{x^n\in\mathcal{X}^n} \breve{p}_{X^n}(x^n) |x^n\rangle \langle x^n| \otimes \rho_{B^n}^{x^n},
\end{align}
 and its marginal state \[\breve{\rho}_{B^n}:= \sum_{x^n\in\mathcal{X}^n} \breve{p}_{X^n}(x^n) \rho_{B^n}^{x^n}\pl.\]
The induced output state via the random constant composition codebook $\breve{\mathcal{C}}^n$ is then
\begin{align}
	{\rho}_{B^n}^{\breve{\mathcal{C}}^n} := \frac{1}{|\breve{\mathcal{C}}^n|} \sum_{x^n\in\breve{\mathcal{C}}^n} \rho_{B^n}^{x^n}.
\end{align}

\begin{theo}[$n$-shot achievability using random constant composition codebook] \label{theo:achievability_composition}
	For any $n\in\mathbb{N}$, consider a c-q state $\rho_{XB} = \sum_{x\in\mathcal{X}} p_X(x) |x\rangle\langle x|\otimes \rho_B^x$, where $n p_X(x) \in 0\cup\mathbb{N}$ for all $x\in\mathcal{X}$, and let $R:= \frac1n \log |\breve{\mathcal{C}}^n|$ for a random constant composition codebook given in \eqref{eq:composition_codebook}.
	The trace distance between the induced state ${\rho}_{B^n}^{\breve{\mathcal{C}}^n}$ and the true marginal state $\breve{\rho}_{B^n}$ is upper bounded by
	\begin{align}
		\frac12\mathds{E}_{\breve{\mathcal{C}}^n} \left\|{\rho}_{B^n}^{\breve{\mathcal{C}}^n} - \breve{\rho}_{B^n} \right\|_1
		\leq
		\mathrm{e}^{  - n  \sup_{\alpha \in (1,2)} \frac{1-\alpha}{\alpha} \left( \breve{I}_\alpha^{*} \left( X{\,:\,}B \right)_\rho - R \right) }.
	\end{align}
	Here, the order-$\alpha$ sandwiched Augustin information $\breve{I}_\alpha^{*} ( X{\,:\,}B )_\rho$ is defined in \eqref{eq:sandwiched_Augustin}. Moreover, the exponent $\sup_{\alpha \in (1,2)} \frac{1-\alpha}{\alpha} ( \breve{I}_\alpha^{*} ( X{\,:\,}B )_\rho - R )$ is positive if and only if $R> I(X{\,:\,}B)_\rho$.
\end{theo}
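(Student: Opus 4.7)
The plan is to derive the theorem by applying the one-shot bound in Theorem~\ref{theo:achievability_iid_oneshot} at blocklength $n$ with input distribution $\breve{p}_{X^n}$, and then to reduce the $n$-letter sandwiched R\'enyi information of $\breve{\rho}_{X^n B^n}$ to the single-letter Augustin quantity $\breve{I}_\alpha^{*}(X{\,:\,}B)_\rho$. Applying Theorem~\ref{theo:achievability_iid_oneshot} to $\breve{\rho}_{X^n B^n}$ and the codebook $\breve{\mathcal{C}}^n$ gives, for every $\alpha\in(1,2)$,
\begin{align}
\frac12\,\mathds{E}_{\breve{\mathcal{C}}^n}\bigl\|\rho_{B^n}^{\breve{\mathcal{C}}^n}-\breve{\rho}_{B^n}\bigr\|_1 \le 2^{\frac{2}{\alpha}-2}\,\e^{\frac{\alpha-1}{\alpha}\bigl(I_\alpha^{*}(X^n{\,:\,}B^n)_{\breve{\rho}}-nR\bigr)},
\end{align}
and since $2^{\frac{2}{\alpha}-2}\le 1$ on $\alpha\in(1,2)$, it remains to prove the single-letterization
\begin{align}
I_\alpha^{*}(X^n{\,:\,}B^n)_{\breve{\rho}} \le n\,\breve{I}_\alpha^{*}(X{\,:\,}B)_\rho.
\end{align}

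The critical step, which is where the Augustin quantity emerges instead of the R\'enyi mutual information, is to restrict the infimum in \eqref{eq:sandwiched_Renyi} to the tensor-product ansatz $\sigma_{B^n}=\sigma_B^{\otimes n}$. For any $x^n\in T_p^n$, using the product structure of $\rho_{B^n}^{x^n}$ together with the fact that each symbol $x\in\mathcal{X}$ appears exactly $np_X(x)$ times in $x^n$, one finds
\begin{align}
\Bigl\|(\sigma_B^{\otimes n})^{\frac{1-\alpha}{2\alpha}}\rho_{B^n}^{x^n}(\sigma_B^{\otimes n})^{\frac{1-\alpha}{2\alpha}}\Bigr\|_\alpha^{\alpha} = \prod_{x\in\mathcal{X}}\Bigl\|\sigma_B^{\frac{1-\alpha}{2\alpha}}\rho_B^{x}\sigma_B^{\frac{1-\alpha}{2\alpha}}\Bigr\|_\alpha^{\alpha n p_X(x)}.
\end{align}
The crucial observation is that the right-hand side does not depend on the particular $x^n\in T_p^n$, so the expectation under $\breve{p}_{X^n}$ is trivial. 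Taking the logarithm, dividing by $\alpha-1$, and then infimizing over $\sigma_B\in\mathcal{S}(\mathcal{H}_B)$ yields precisely $n\,\breve{I}_\alpha^{*}(X{\,:\,}B)_\rho$.

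Substituting back into the one-shot bound and taking the supremum over $\alpha\in(1,2)$ delivers the announced trace-distance inequality. For the positivity statement, I would mirror the argument of Proposition~\ref{prop:achievability_iid}: the map $\alpha\mapsto\breve{I}_\alpha^{*}(X{\,:\,}B)_\rho$ is monotone non-decreasing in $\alpha$ (inherited from the corresponding property of $D_\alpha^{*}$), and $\lim_{\alpha\to 1}\breve{I}_\alpha^{*}(X{\,:\,}B)_\rho = I(X{\,:\,}B)_\rho$ by \eqref{eq:alpha1}. These two facts together imply that $\sup_{\alpha\in(1,2)}\tfrac{1-\alpha}{\alpha}\bigl(\breve{I}_\alpha^{*}(X{\,:\,}B)_\rho-R\bigr)>0$ if and only if $R>I(X{\,:\,}B)_\rho$.

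The only point where something could conceivably go wrong is the single-letterization: one might worry that $I_\alpha^{*}(X^n{\,:\,}B^n)_{\breve{\rho}}$ fails to decouple tensorially because the input distribution $\breve{p}_{X^n}$ is \emph{not} a product distribution. The resolution is precisely the permutation invariance of the type class combined with the product ansatz for $\sigma_{B^n}$, which collapses the expectation to a single term and produces the Augustin rather than the R\'enyi mutual information quantity. I expect this to be the main (and essentially only) conceptual subtlety; the rest is routine.
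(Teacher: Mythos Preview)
Your proposal is correct and follows essentially the same route as the paper. The only cosmetic difference is that you invoke Theorem~\ref{theo:achievability_iid_oneshot} as a black box at blocklength $n$ and then single-letterize, whereas the paper re-runs the interpolation argument of Lemma~\ref{lemm:inter} directly on $\breve{\rho}_{\Omega B^n}$; in both cases the decisive observation is that with the tensor-product ansatz $\sigma_{B^n}=\sigma_B^{\otimes n}$ the quantity $\bigl\|(\sigma_B^{\otimes n})^{\frac{1-\alpha}{2\alpha}}\rho_{B^n}^{x^n}(\sigma_B^{\otimes n})^{\frac{1-\alpha}{2\alpha}}\bigr\|_\alpha$ is constant over $x^n\in T_p^n$, which collapses the $n$-letter R\'enyi information to $n\,\breve{I}_\alpha^{*}(X{\,:\,}B)_\rho$.
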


\begin{remark}
	Jensen's inequality together with the concavity of logarithmic function show that
	\begin{align}
		\tfrac{1-\alpha}{\alpha} {I}_{\alpha}^{*}(X{\,:\,}B)_\rho  \leq \tfrac{1-\alpha}{\alpha} {I}_{\alpha}^{*}(X{\,:\,}B)_\rho, \quad \forall \alpha >1.
	\end{align}
	Hence, the expected value of the trace distance decays faster when using random constant composition codebook compared to that of using random i.i.d.~codebook.
\end{remark}

\begin{proof}The idea is similar to Theorem \ref{theo:achievability_iid_oneshot}.
	We write $\mathcal{H}\equiv \mathcal{H}_B$ throughout the proof.	
	For $x^n=x_1\cdots x_n\in \mathcal{X}^n$,
	we write the output state as $\rho_{x^n}:=\rho_{x_1}\otimes \cdots\otimes \rho_{x_n}\in \mathcal{B(H)}^{\otimes n}$.
	Let $x^n: \Omega\to \mathcal{X}^n$ be a random codeword with respect to the uniform distribution on type class $\breve{p}_{X^n}$.
	We introduce the classical-quantum state as
	\[\breve{\rho}_{\Omega B^n}=\sum_{x^n\in T_p^n} 1_{A_{x^n}}\ten \rho_{x^n} \in L_\infty\left(\Omega, \mathcal{B(H)}^{\otimes n}\right)\pl,\]
	where $1_{A_{x^n}}$ is the characteristic function on the mutually disjoint set $A_{x^n}$ such that
	$\Pr(A_{x^n})=\frac{1}{|T_p^n|}$. It is clear that $\mathbb{E}_{\Omega} \left[\breve{\rho}_{\Omega B^n}\right]=\breve{\rho}_{B^n}$.
	Take $1<\al< 2$ and $\frac{1}{\al}+\frac{1}{\al'}=1$.
	Let $n_x := n p_X(x)$ for some integer $n_x \in \mathbb{N}$ for all $x\in \mathcal{X}$.
	The Augustine information $\breve{I}_\alpha^{*} \left( X{\,:\,}B \right)_\rho$ can be expressed as
	\begin{align*}
		&\inf_{\sigma\in\mathcal{S}(\mathcal{H})} \frac{\alpha}{\alpha-1}  \sum_{x\in\mathcal{X}} \frac{n_x}{n} \log \left\| \sigma^{-\frac{1}{2\al'}}  \rho_x \sigma^{-\frac{1}{2\al'}}  \right\|_\alpha
		\\
		& \overset{}{=} \inf_{\sigma\in\mathcal{S}(\mathcal{H})} \frac{1}{n} \cdot \frac{\alpha}{\alpha-1} \log \left\| (\sigma^{\ten n})^{-\frac{1}{2\al'}}  \rho_{x^n} (\sigma^{\ten n})^{-\frac{1}{2\al'}} \right\|_\alpha^{n_x}
	\end{align*}
	for any $x_n\in T_p^n$. Then we further have
	\begin{align*}
		\breve{I}_\alpha^{*} \left( X{\,:\,}B \right)_\rho 
		&= \inf_{\sigma\in\mathcal{S}(\mathcal{H})} \frac{1}{n} \cdot \frac{\alpha}{\alpha-1} \log \norm{ \sigma_{\Omega B^n}^{-\frac{1}{2\al'}}\breve{\rho}_{\Omega B^n} \sigma_{\Omega B^n}^{-\frac{1}{2\al'}}}{L_\al(\Omega, S_\al(\mathcal{H}^{\otimes n}))}.
	\end{align*}
	where $\sigma_{\Omega B^n}=1_\Omega \ten \sigma_B^{\otimes n}$ is interpreted as a constant function with value $ \sigma_B^{\otimes n}\in \mathcal{B(H)}^{\otimes n}\cong \mathcal{B}(\mathcal{H}^{\otimes n})$. In other words,
	\begin{align*}\e^{ n\cdot \frac{\alpha-1}{\alpha } \breve{I}_\alpha^{*} \left(X{\,:\,}B \right)_\rho} =  \inf_{\sigma\in\mathcal{S}(\mathcal{H})} \norm{ \sigma_{\Omega B_n}^{-\frac{1}{2\al'}}\breve{\rho}_{\Omega B^n} \sigma_{\Omega B_n}^{-\frac{1}{2\al'}}}{L_\al(\Omega, S_\al(\mathcal{H}^{\otimes n}))}.
	\end{align*}
	Denote by $M:=|\breve{\mathcal{C}}^n|$.
	Now using the construction given in the Lemma~\ref{lemm:inter}, we have
	\begin{align*}\mathds{E}_{\breve{\mathcal{C}}^n} \left\|{\rho}_{B^n}^{\breve{\mathcal{C}}^n} - \breve{\rho}_{B^n} \right\|_1
		&= \norm{\Theta (\rho_{\Omega B^n})}{L_1(\Omega^{ M},S_1(\mathcal{H}^{\ten n}))}.
	\end{align*}
	Note that for any $ \sigma\in\mathcal{S}(\mathcal{H})$,
	\begin{align*}
		&\norm{\Theta (\rho_{\Omega B^n})}{L_1(\Omega^{ M},S_1(\mathcal{H}^{\ten n}))} \\
		&\overset{\text{(a)}}{\le} \;\norm{\sigma_{\Omega B^n}^{-\frac{1}{2\al'}}\breve{\rho}_{\Omega B^n} \sigma_{\Omega B^n}^{-\frac{1}{2\al'}}}{L_\al(\Omega^{ M},S_\al(\mathcal{H}^{\ten n}))}
		\\
		&\overset{\text{(b)}}{=} \;\norm{\Theta (\sigma_{\Omega B^n}^{-\frac{1}{2\al'} }\breve{\rho}_{\Omega B^n}\sigma_{\Omega B^n}^{-\frac{1}{2\al'} })}{L_\al(\Omega^{ M},S_\al(\mathcal{H}^{\ten n}))}
		\\
		&\le \;\norm{\Theta: L_\al(\Omega,S_\al(\mathcal{H}^{\ten n}))\to L_\al(\Omega^{ M},S_\al(\mathcal{H}^{\ten n}))}{} \cdot \norm{ \sigma_{\Omega B^n}^{-\frac{1}{2\al'}}\breve{\rho}_{\Omega B^n} \sigma_{\Omega B^n}^{-\frac{1}{2\al'}}}{L_\al(\Omega, S_\al(\mathcal{H}^{\otimes n}))}.
	\end{align*}
	Here, (b) follows from the fact $\Theta= \Theta \ten \id_{\mathcal{B(H)}}$ is identity on the operator part, and (a) uses H\"older inequality with
	\begin{align*}
		\norm{ \sigma_{B^n}^{\frac{1}{2\al'} }}{L_{2\al}(\Omega^{ M},S_{2\al}(\mathcal{H}^{\ten n}))}
		&=
		\Big(\int_{\Omega^{ M}} \norm{(\sigma^{\frac{1}{2\al'} })^{\ten n}}{S_{2\al'}(\mathcal{H}^{\ten n})}^{2\al'}d\mu^{ M}\Big)^{\frac{1}{2\al'}}\\
		&=1\pl.
	\end{align*}
	Then the assertion follows from Lemma~\ref{lemm:inter} and taking infimum over $\sigma\in\mathcal{S(H)}$, i.e.~
	\begin{align*}\mathds{E}_{\breve{\mathcal{C}}^n} \left\|{\rho}_{B^n}^{\breve{\mathcal{C}}^n} - \breve{\rho}_{B^n} \right\|_1 
		&\le  2^{\frac{2}{\alpha}-1}M^{\frac{1-\alpha}{\alpha}}  \inf_{\sigma\in\mathcal{S}(\mathcal{H})}  \norm{ \sigma_{\Omega B_n}^{-\frac{1}{2\al'}}\breve{\rho}_{\Omega B^n} \sigma_{\Omega B_n}^{-\frac{1}{2\al'}}}{L_\al(\Omega, S_\al(\mathcal{H}^{\otimes n}))}\\ 
		&= 2^{\frac{2}{\alpha}-1}M^{\frac{1-\alpha}{\alpha}} \e^{ n\cdot \frac{\alpha-1}{\alpha } \breve{I}_\alpha^{*} \left(X{\,:\,}B \right)_\rho}\\
		&\le 2 \e^{ -n\cdot \frac{1-\alpha}{\alpha } (\breve{I}_\alpha^{*} \left(X{\,:\,}B \right)_\rho-R)},
	\end{align*}
	where $R= \frac1n \log |{\mathcal{C}}^n|$. The assertion of exponential decay follows from taking infimum of the right-hand side for $\alpha\in (1,2)$.
	The positivity again follows from the monotone non-decreasing of the map $\alpha\mapsto \breve{I}_\alpha^{*}$ \cite[Proposition~5]{CGH18} and \eqref{eq:alpha1}.
\end{proof}

\section{Strong Converse} \label{sec:sc}
In the previous Section~\ref{sec:achievability}, we have presented that as long as the rate of the random codebook size is \emph{above} the quantum mutual information $I(X{\,:\,}B)_\rho$, the trace distances using both the random i.i.d.~codebook and the random constant composition codebook exponentially decay.
In this section, we show that, on the other hand, when the rate of the random codebook size is \emph{below} the quantum mutual information $I(X{\,:\,}B)_\rho$, the trace distances using both the two random codebooks converge to $1$ exponentially fast, reflecting the \emph{exponential strong converse }.

Using the notation as in Section~\ref{sec:achievability}, we first prove the following one-shot strong converse bound. 
\begin{theo}[A one-shot strong converse] \label{theo:sc}
	The trace distance between the induced  state ${\rho_{B}^\mathcal{C}}$ and the true state $\rho_B$ is lower bounded by, 
	\begin{align}
		\frac12\mathds{E}_\mathcal{C} \left\|\rho_{B}^\mathcal{C} - \rho_B \right\|_1
		\geq 1 - 4\, \mathrm{e}^{\frac{\alpha-1}{\alpha} \left(  I_{2-\sfrac{1}{\alpha}}^{\downarrow}(X{\,:\,}B)_\rho - \log M \right) }
				, \quad \forall \alpha \in (\sfrac12, 1).
	\end{align}
	Here, $I^\downarrow_{2-\sfrac{1}{\alpha}}(X{\,:\,}B)_\rho$ is defined in \eqref{eq:Petz_Renyi_down}.
\end{theo}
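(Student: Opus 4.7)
The plan is to combine the quantum Chernoff bound with a non-commutative moment estimate, mirroring the Oohama--Yagi type strong converse proof in the classical resolvability problem.

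Setting $\beta:=2-1/\alpha\in(0,1)$, I first pass from trace distance to a Petz-type trace functional via the Audenaert--Calsamiglia--Masanes--Mu\~noz Tapia--Ac\'{\i}n--Bagan--Verstraete (one-shot quantum Chernoff bound) inequality: for any states $\rho,\sigma$ and any $s\in[0,1]$,
$$\tr[\{\rho<\sigma\}\rho]+\tr[\{\rho\geq\sigma\}\sigma]\leq \tr[\rho^{s}\sigma^{1-s}].$$
Applied with $\rho=\rho_B$, $\sigma=\rho_B^{\mathcal{C}}$, and $s=1-\beta$, and combined with the variational formula $\frac12\|\rho_B-\rho_B^{\mathcal{C}}\|_1=\tr[\{\rho_B\geq\rho_B^{\mathcal{C}}\}(\rho_B-\rho_B^{\mathcal{C}})]$, this gives the pointwise estimate
$$\frac12\|\rho_B-\rho_B^{\mathcal{C}}\|_1\;\geq\;1-\tr[\rho_B^{1-\beta}(\rho_B^{\mathcal{C}})^{\beta}].$$

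Taking expectation over $\mathcal{C}$, the theorem is reduced to the non-commutative moment bound
$$\mathds{E}_{\mathcal{C}}\tr[\rho_B^{1-\beta}(\rho_B^{\mathcal{C}})^{\beta}]\;\leq\;4\,M^{1-\beta}\,e^{(\beta-1)I^{\downarrow}_{\beta}(X{\,:\,}B)_\rho}.$$
The difficulty is that $x\mapsto x^{\beta}$ is operator concave for $\beta\in(0,1)$, so a direct Jensen argument on $\rho_B^{\mathcal{C}}=M^{-1}\sum_m\rho_B^{x_m}$ goes the wrong way. To bypass this I plan to apply the Araki--Lieb--Thirring inequality at exponent $\beta\in(0,1]$ to fold the $\rho_B^{1-\beta}$ factor into the base of the power,
$$\tr[\rho_B^{1-\beta}(\rho_B^{\mathcal{C}})^{\beta}]\;\leq\;\tr\!\Big[\Big(\rho_B^{\frac{1-\beta}{2\beta}}\rho_B^{\mathcal{C}}\rho_B^{\frac{1-\beta}{2\beta}}\Big)^{\!\beta}\Big],$$
then expand $\rho_B^{\mathcal{C}}=\frac{1}{M}\sum_m\rho_B^{x_m}$ and apply Rotfel'd's trace inequality $\tr[(\sum_m X_m)^{\beta}]\leq \sum_m\tr[X_m^{\beta}]$ (valid for $X_m\geq 0$ and $\beta\in(0,1)$) to split the sum, extracting the factor $M^{1-\beta}$ after the i.i.d.~expectation. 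It remains to translate the resulting sandwiched Rényi quantity back to the Petz form $\sum_x p_X(x)\tr[(\rho_B^x)^{\beta}\rho_B^{1-\beta}]=e^{(\beta-1)I^{\downarrow}_{\beta}}$.

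The main obstacle is this last translation. The combination of Araki--Lieb--Thirring and Rotfel'd naturally produces a \emph{sandwiched} Rényi quantity, whereas the statement is in terms of the \emph{Petz} information $I^{\downarrow}_{\beta}$. For $\beta\in(\tfrac12,1)$ one has $D_{\beta}^{*}\leq D_{\beta}$, which with the negative prefactor $\beta-1$ makes the sandwiched expression \emph{larger} than the target, so a simple comparison goes the wrong direction. I expect the missing ingredient to be either a non-commutative substitute for the scalar subadditivity $(\sum_m a_m)^{\beta}\leq\sum_m a_m^{\beta}$ that lands directly on the Petz form (possibly via a Kadison--Schwarz type argument, or via complex interpolation tailored to the bilinear pairing $(\rho,\sigma)\mapsto\tr[\rho^{1-\beta}\sigma^{\beta}]$, in the spirit of Lemma~\ref{lemm:inter} but adapted to lower bounds), or a two-step Chernoff-type estimate where the multiplicative constant $4$ absorbs the sandwiched-to-Petz slack. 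Identifying this clean non-commutative substitute is the heart of the argument.
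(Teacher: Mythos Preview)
Your first step via the Audenaert inequality is sound and gives the clean pointwise bound $\tfrac12\|\rho_B-\rho_B^{\mathcal C}\|_1\ge 1-\tr[\rho_B^{1-\beta}(\rho_B^{\mathcal C})^{\beta}]$. The obstacle you flag is, however, genuine and not just a matter of constants: the combination ALT $+$ Rotfel'd necessarily produces the sandwiched quantity $\sum_m\|\rho_B^{\frac{1-\beta}{2\beta}}\rho_B^{x_m}\rho_B^{\frac{1-\beta}{2\beta}}\|_\beta^\beta$, and for $\beta\in(0,1)$ the comparison $D_\beta^{*}\le D_\beta$ goes the wrong way. There is no operator subadditivity $(\sum_m X_m)^\beta\le\sum_m X_m^\beta$ to rescue you (it is false already for $2\times2$ projections), and a Kadison--Schwarz or interpolation argument on the bilinear form $(\rho,\sigma)\mapsto\tr[\rho^{1-\beta}\sigma^\beta]$ will again land on sandwiched, since that is exactly what the Kosaki/amalgamated $L_p$ structure encodes. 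As written, your route proves a \emph{weaker} strong converse (sandwiched in place of Petz), not the stated theorem.

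The paper sidesteps the whole issue by never forming $(\rho_B^{\mathcal C})^\beta$. Instead of Audenaert on the pair $(\rho_B,\rho_B^{\mathcal C})$, it picks the pretty-good-measurement test $\Pi=(\rho_B^{\mathcal C}+\rho_B)^{-1/2}\rho_B^{\mathcal C}(\rho_B^{\mathcal C}+\rho_B)^{-1/2}$ and expands \emph{one} copy of $\rho_B^{\mathcal C}=\tfrac1M\sum_{x}\rho_B^x$ before applying any R\'enyi-type inequality. The key device is the trace inequality $\tr[K(K+L)^{-1/2}L(K+L)^{-1/2}]\le\tr[K^{1-s}L^s]$ (Lemma~\ref{lemm:trace}), used with $K=\rho_B^{x}$ a \emph{single} codeword and $L=\sum_{\bar x\neq x}\rho_B^{\bar x}+M\rho_B$. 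Thus $K^{1-s}=(\rho_B^x)^{1-s}$ is already in Petz form, and only $L^s$ must be averaged; since the other codewords sit \emph{inside} a single $s$-th power, operator concavity (not subadditivity) gives $\mathds{E}_{\bar x}[L^s]\le(\mathds{E}_{\bar x}L)^s\le((2M-1)\rho_B)^s$, yielding $\tr[(\rho_B^x)^{1-s}\rho_B^s]$ directly. The factor $4$ comes from treating the two terms of $\tr[(\rho_B^{\mathcal C}-\rho_B)\Pi]$ separately and the bound $(2M)^s\le 2M^s$. The structural idea you are missing is this ``isolate one codeword before the power, average the rest by concavity'' manoeuvre; it replaces the subadditivity step that forces you into the sandwiched world.
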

In Sections~\ref{sec:sc_iid} and \ref{sec:sc_cc} later, we will apply the one-shot strong converse, Theorem~\ref{theo:sc}, to the random i.i.d.~codebook and the random constant composition codebook.

\begin{proof}
	
	Using the Holevo--Helstrom theorem \cite{Hel67, Hol72}, i.e.~
	\begin{align}
		\frac12\|\rho-\sigma\|_1 = 
	\sup_{0\leq \Pi\leq \mathds{1}}  \Tr\left[(\rho-\sigma)\Pi \right],
	\end{align}
	we have
	\begin{align}
		\frac12 \left\|\rho_{B}^\mathcal{C} - \rho_B \right\|_1
		&=\sup_{0\leq \Pi_B \leq \mathds{1}_B} \Tr\left[ \left(\rho_{B}^\mathcal{C} - \rho_B\right) \Pi_B \right] \\
		\begin{split} \label{eq:sc1}
			&\geq
			\Tr\left[ \rho_{B}^\mathcal{C} \left( \rho_{B}^\mathcal{C} + \rho_B \right)^{-\sfrac12} \rho_{B}^\mathcal{C} \left( \rho_{B}^\mathcal{C} + \rho_B \right)^{-\sfrac12} \right] \\
			&\quad- \Tr\left[ \rho_{B} \left( \rho_{B}^\mathcal{C} + \rho_B \right)^{-\sfrac12} \rho_{B}^\mathcal{C} \left( \rho_{B}^\mathcal{C} + \rho_B \right)^{-\sfrac12} \right].
		\end{split}
	\end{align}
	We will then lower bound the two terms in \eqref{eq:sc1} subsequently.
	
	Recalling ${\rho}_{B}^\mathcal{C} = \frac1M \sum_{x\in\mathcal{C}} \rho_B^x$, we rewrite the first term in \eqref{eq:sc1} as follows:
	\begin{align}
		&\Tr\left[ \rho_{B}^\mathcal{C} \left( \rho_{B}^\mathcal{C} + \rho_B \right)^{-\sfrac12} \rho_{B}^\mathcal{C} \left( \rho_{B}^\mathcal{C} + \rho_B \right)^{-\sfrac12} \right] \notag\\
		&= \frac1M \sum_{x\in\mathcal{C}} \Tr\left[ \rho_{B}^{x} \left( \rho_{B}^\mathcal{C} + \rho_B \right)^{-\sfrac12} \rho_{B}^\mathcal{C} \left( \rho_{B}^\mathcal{C} + \rho_B \right)^{-\sfrac12} \right] \\
		&= \frac1M \sum_{x\in\mathcal{C}} \Tr\left[ \rho_{B}^{x} \left( \sum_{\bar x\in\mathcal{C}} \rho_B^{\bar x} + M\rho_B \right)^{-\sfrac12} \left( \rho_B^x +\sum_{\bar x\in\mathcal{C} , \bar x\neq x} \rho_B^{\bar x} \right) \left( \sum_{\bar x\in\mathcal{C}} \rho_B^{\bar x} + M\rho_B \right)^{-\sfrac12} \right] \\
		&\geq \frac1M \sum_{x\in\mathcal{C}} \Tr\left[ \rho_{B}^{x} \left( \sum_{\bar x\in\mathcal{C}} \rho_B^{\bar x} + M\rho_B \right)^{-\sfrac12}  \rho_B^x  \left( \sum_{\bar x\in\mathcal{C}} \rho_B^{\bar x} + M\rho_B \right)^{-\sfrac12} \right] \\
		&\geq \frac1M \sum_{x\in\mathcal{C}} \left( 1 - \Tr\left[ (\rho_B^x)^{1-s} \left( \sum_{\bar x\in\mathcal{C} , \bar x\neq x} \rho_B^{\bar x} + M \rho_B \right)^s \right] \right), \quad \forall  s\in(0,1),
	\end{align}
	where we have applied Lemma~\ref{lemm:trace} given below with $K = \rho_B^x$ and $L =  \sum_{{\bar  x\in\mathcal{C} , \bar x\neq x}}\rho_B^{\bar x} + M \rho_B$ to the last inequality.
	
	Recalling the fact that each codeword (e.g.~$x,\bar x\in\mathcal{C}$) is drawn independently, and using the operator concavity of $(\,\cdot\,)^s$ for $s\in(0,1)$, we average the above inequality over the random codebook $\mathcal{C}$ to arrive at
	\begin{align}
		&\mathds{E}_\mathcal{C} \Tr\left[ \rho_{B}^\mathcal{C} \left( \rho_{B}^\mathcal{C} + \rho_B \right)^{-\sfrac12} \rho_{B}^\mathcal{C} \left( \rho_{B}^\mathcal{C} + \rho_B \right)^{-\sfrac12} \right] \notag\\
		&\geq \frac1M \sum_{x\in\mathcal{C}} \left( 1 - \mathds{E}_{x\sim p_X } \Tr\left[ (\rho_B^x)^{1-s} \mathds{E}_{\bar x \sim p_X}\left( \sum_{{\bar x \in\mathcal{C} , \bar x\neq x}} \rho_B^{\bar x} + M \rho_B \right)^s \right] \right) \\
		&\geq \frac1M \sum_{x\in\mathcal{C}} \left( 1 - \mathds{E}_{x\sim p_X } \Tr\left[ (\rho_B^x)^{1-s} \left( \sum_{{\bar x\in\mathcal{C} , \bar x\neq x}} \mathds{E}_{\bar x \sim p_X}\rho_B^{x'} + M \rho_B \right)^s \right] \right) \\
		&= \frac1M \sum_{x\in\mathcal{C}} \left( 1 - \mathds{E}_{x\sim p_X } \Tr\left[ (\rho_B^x)^{1-s} \left( \sum_{{x'\in\mathcal{C} , \bar x\neq x}} (M-1)\rho_B + M \rho_B \right)^s \right] \right) \\
		&\geq 1-(2M)^s \mathds{E}_{x\sim p_X}\Tr\left[ (\rho_B^x)^{1-s} \rho_B ^s \right], \quad \forall  s\in(0,1). \label{eq:sc2}
	\end{align}
	
	Next, we lower bound the second term in \eqref{eq:sc1}. Using the cyclic property of trace, we have
	\begin{align}
		&-\Tr\left[ \rho_{B} \left( \rho_{B}^\mathcal{C} + \rho_B \right)^{-\sfrac12} \rho_{B}^\mathcal{C} \left( \rho_{B}^\mathcal{C} + \rho_B \right)^{-\sfrac12} \right] \notag \\
		&= -\frac1M \sum_{x\in\mathcal{C}} \Tr\left[\rho_B^x \left( \rho_{B}^\mathcal{C} + \rho_B \right)^{-\sfrac12} \rho_{B} \left( \rho_{B}^\mathcal{C} + \rho_B \right)^{-\sfrac12} \right] \\
		&= -\frac1M \sum_{x\in\mathcal{C}} \Tr\left[ \rho_{B}^{x} \left( \sum_{\bar x\in\mathcal{C}} \rho_B^{\bar x} + M\rho_B \right)^{-\sfrac12} M\rho_B \left( \sum_{\bar x\in\mathcal{C}} \rho_B^{\bar x} + M\rho_B \right)^{-\sfrac12} \right] \\
		&\geq -\frac1M \sum_{x\in\mathcal{C}} \Tr\left[ \rho_{B}^{x} \left( \sum_{\bar x\in\mathcal{C}} \rho_B^{\bar x} + M\rho_B \right)^{-\sfrac12} \left( \sum_{{\bar x\in\mathcal{C} , \bar  x\neq x}} \rho_B^{\bar x} + M \rho_B\right) \left( \sum_{x\in\mathcal{C}} \rho_B^x + M\rho_B \right)^{-\sfrac12} \right] \\
		&\geq -\frac1M \sum_{x\in\mathcal{C}} \Tr\left[ \left(\rho_{B}^{x}\right)^{1-s} \left( \sum_{{\bar x\in\mathcal{C} , \bar x\neq x}} \rho_B^{\bar x} + M \rho_B \right)^s \right], \quad \forall s \in (0,1),
	\end{align}
	where we invoked Lemma~\ref{lemm:trace} again with $K = \rho_B^x$ and $L =  \sum_{{\bar x\in\mathcal{C} , \bar x\neq x}}\rho_B^{\bar x} + M \rho_B$ to the last inequality.
	
	Similar, we take averaging over the random codebook $\mathcal{C}$ and follow previous reasoning to have
	\begin{align}
		- \mathds{E}_{\mathcal{C}}\Tr\left[ \rho_{B} \left( \rho_{B}^\mathcal{C} + \rho_B \right)^{-\sfrac12} \rho_{B}^\mathcal{C} \left( \rho_{B}^\mathcal{C} + \rho_B \right)^{-\sfrac12} \right]
		&\geq -(2M)^s \mathds{E}_{x\sim p_X}\Tr\left[ (\rho_B^x)^{1-s} \rho_B ^s \right], \quad \forall s\in(0,1). \label{eq:sc3}
	\end{align}
	
	Combining \eqref{eq:sc1}, \eqref{eq:sc2}, and \eqref{eq:sc3} proves our claim with substitution $\alpha = \frac{1}{1+s}$.
\end{proof}

\begin{lemm}[A trace inequality {\cite[Lemma 3]{SGC22a}}] \label{lemm:trace}
	For any positive semi-definite $K$ and $L$, the following holds,
	\begin{align}
		\Tr\left[ K (K+L)^{-\sfrac12} L (K+L)^{-\sfrac12} \right] \leq \Tr\left[ K^{1-s} L^s \right], \quad \forall s\in (0,1).
	\end{align}
\end{lemm}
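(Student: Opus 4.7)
The plan is to recognize the left-hand side as a pretty-good-measurement (PGM) error in binary symmetric quantum hypothesis testing of $K$ versus $L$, and then combine an elementary operator-functional-calculus inequality with an Araki--Lieb--Thirring (ALT) trace inequality (or, as a backup, complex interpolation) to obtain the Chernoff-type bound. By cyclicity of the trace,
\begin{align*}
\Tr\bigl[K(K+L)^{-1/2} L (K+L)^{-1/2}\bigr] = \Tr[L P] = \bigl\|L^{1/2}(K+L)^{-1/2} K^{1/2}\bigr\|_{2}^{2},
\end{align*}
where $P := (K+L)^{-1/2} K (K+L)^{-1/2}$ is PSD with $0 \leq P \leq I$ (because $K \leq K+L$). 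To orient the argument I would first verify the commuting case: the inequality collapses to the scalar estimate $kl/(k+l) \leq k^{1-s} l^{s}$, which is equivalent to the weighted Young inequality $l^{1-s} k^{s} \leq (1-s) l + s k \leq k+l$, so the operator proof should be some non-commutative incarnation of Young.

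To lift the scalar calculation, my first move would be to apply the elementary functional-calculus inequality $P \leq P^{s}$, valid for $0 \leq P \leq I$ and $s \in (0,1)$, to obtain
\begin{align*}
\Tr[LP] \leq \Tr\Bigl[L\,\bigl((K+L)^{-1/2} K (K+L)^{-1/2}\bigr)^{s}\Bigr].
\end{align*}
I would then invoke ALT to push the exponent $s$ off of the $(K{+}L)^{-1/2}$-sandwich and onto $K$ and $(K+L)$ individually, and follow this by the operator monotonicity of $x \mapsto -x^{-1}$, which gives $(K+L)^{-1} \leq L^{-1}$ on $\mathrm{supp}(L)$. Raising to the $s$-th power and combining converts the surviving $(K+L)^{-s}$ factor into one controlled by $L^{-s}$, and a final cyclic rearrangement yields the desired bound $\Tr[K^{1-s} L^{s}]$.

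The main obstacle is the ALT step in the presence of the extra factor $L$ inside the trace: the classical two-operator ALT $\Tr[(A^{1/2} B A^{1/2})^{s}] \geq \Tr[A^{s/2} B^{s} A^{s/2}]$ for $s \in [0,1]$ does not admit a cost-free three-operator generalization that accommodates the $L$ sitting next to $P^{s}$. If this three-operator chase proves too delicate, the cleanest alternative is the complex-interpolation framework of Appendix~\ref{sec:interpolation}: define an analytic family $F(z)$ on the strip $\Re(z) \in [0,1]$ so that $F(s)$ dominates the left-hand side while the boundary values $|F(iy)|$ and $|F(1+iy)|$ are each bounded by $\Tr[K^{1-s} L^{s}]$, using $\|K^{it}\|_{\infty} = \|L^{it}\|_{\infty} = \|(K+L)^{it}\|_{\infty} = 1$; Hadamard's three-lines theorem then delivers the claim at $z = s$ without any Lieb--Thirring chasing.
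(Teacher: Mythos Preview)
The paper does not prove this lemma; it is quoted verbatim from \cite[Lemma~3]{SGC22a} with no argument supplied. So there is no ``paper's own proof'' to compare against, and your proposal has to stand on its own.

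Your first reduction, $\Tr[LP]\le \Tr[LP^{s}]$ via $P\le P^{s}$ for $0\le P\le I$, is correct. The trouble is everything after that. As you yourself flag, the Araki--Lieb--Thirring step does not go through: ALT is a two-operator \emph{trace} statement, and there is no off-the-shelf version that lets you pass from $\Tr\bigl[L\,((K{+}L)^{-1/2}K(K{+}L)^{-1/2})^{s}\bigr]$ to $\Tr\bigl[L\,(K{+}L)^{-s/2}K^{s}(K{+}L)^{-s/2}\bigr]$. The corresponding \emph{operator} inequality $((K{+}L)^{-1/2}K(K{+}L)^{-1/2})^{s}\le (K{+}L)^{-s/2}K^{s}(K{+}L)^{-s/2}$ is false for $s\in(0,1)$ (Hansen's inequality for the operator-monotone $t\mapsto t^{s}$ points the other way), so you cannot recover the weighted trace bound from a pointwise bound either. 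Even granting that step, the next one also fails: from $(K{+}L)^{-s}\le L^{-s}$ you cannot deduce $(K{+}L)^{-s/2}K^{s}(K{+}L)^{-s/2}\le L^{-s/2}K^{s}L^{-s/2}$, because $A\le B$ does not imply $A^{1/2}CA^{1/2}\le B^{1/2}CB^{1/2}$ for a fixed positive $C$. Concretely, take $L=|0\rangle\langle 0|$ and any $K$ not commuting with $L$; then $(K{+}L)^{-s/2}L(K{+}L)^{-s/2}$ has mass outside $\operatorname{supp}L^{1-s}$, so already $M^{-s/2}LM^{-s/2}\le L^{1-s}$ fails at the operator level, and the ``cyclic rearrangement'' you need collapses.

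Your interpolation backup is, as written, not a proof: you have not specified the analytic family $F(z)$, and the stated boundary control---both lines bounded by the \emph{same} $s$-dependent constant $\Tr[K^{1-s}L^{s}]$---is not how a three-lines argument is organized (the $s$ should be the interpolation parameter, with different bounds on the two edges). If you want to pursue this route, one genuinely useful observation is the factorization $L^{1/2}(K{+}L)^{-1/2}K^{1/2}=L^{s/2}\,X\,K^{(1-s)/2}$ with $X:=L^{(1-s)/2}(K{+}L)^{-1/2}K^{s/2}$, for which $\|X\|_{\infty}\le 1$ (use $K^{s}\le (K{+}L)^{s}$ and then $(K{+}L)^{s-1}\le L^{s-1}$). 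But $\|X\|_{\infty}\le 1$ alone does \emph{not} yield $\|L^{s/2}XK^{(1-s)/2}\|_{2}\le\|L^{s/2}K^{(1-s)/2}\|_{2}$, so additional structure of $X$ must be exploited. At present the proposal is a plan with the decisive inequality missing rather than a proof.
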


\subsection{Random I.I.D.~Codebook} \label{sec:sc_iid}
In the following, we consider the $n$-shot scenario of quantum soft covering with i.i.d.~prior $p_X^{\otimes n}$ as stated in Section~\ref{sec:achievability_iid}.
We then apply the one-shot strong converse, Theorem~\ref{theo:sc} with the random i.i.d.~codebook. We show that the quantum  mutual information $I(X{\,:\,}B)_\rho$ is the \emph{strong converse rate} of the quantum soft covering, meaning that the trace distance exponentially converges to $1$ when the rate of the random codebook size is below $I(X{\,:\,}B)_\rho$.

\begin{prop}[Exponential strong converse using random i.i.d.~codebook] \label{prop:sc_iid}
	Let $\rho_{XB} = \sum_{x\in\mathcal{X}} p_X(x)|x\rangle \langle x|\otimes \rho_B^x$ be a classical-quantum state, and let $R>0$. For any $n\in\mathds{N}$, let
	$\mathcal{C}^n:=\{x^n(1),\ldots, x^n(M)\}$ be a random i.i.d.~codebook with rate $R = \frac1n \log |\mathcal{C}^n|$, where each $x^n(m)$ is independently drawn from $p_X^{\otimes n}$.	Then, for any $n\in\mathds{N}$,
	\begin{align}
		\frac12\mathds{E}_{\mathcal{C}^n}  \left\|\rho_{B^n}^{\mathcal{C}^n}  -  \rho_B^{\otimes n} \right\|_1
		& \geq 1  -  4\,\mathrm{e}^{-n\sup\limits_{\alpha \in (\sfrac12,1)} \frac{1-\alpha}{\alpha}\left( I_{2-\frac{1}{\alpha}}^{\downarrow}(X{\,:\,}B)_\rho - R\right) }.
	\end{align}
	\noindent Moreover, the exponent $\sup_{\alpha \in (\sfrac12,1)} \frac{1-\alpha}{\alpha}( I_{2-\sfrac{1}{\alpha}}^{\downarrow}(X{\,:\,}B)_\rho - R)$ is positive if and only if $R< I(X{\,:\,}B)_\rho$.
\end{prop}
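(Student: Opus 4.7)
The plan is to invoke the one-shot strong converse bound (Theorem~\ref{theo:sc}) at blocklength $n$, exploit the additivity of the Petz-type R\'enyi mutual information on product states, and then optimize over the R\'enyi parameter. Since each codeword of $\mathcal{C}^n$ is i.i.d.\ under $p_X^{\otimes n}$, the one-shot bound applied to the product c-q state $\rho_{XB}^{\otimes n}$ and codebook size $M = \mathrm{e}^{nR}$ immediately yields
\begin{align}
\tfrac12 \mathds{E}_{\mathcal{C}^n}\!\left\|\rho_{B^n}^{\mathcal{C}^n} - \rho_B^{\otimes n}\right\|_1 \geq 1 - 4\,\mathrm{e}^{\frac{\alpha-1}{\alpha}\left( I_{2-\sfrac{1}{\alpha}}^{\downarrow}(X^n{\,:\,}B^n)_{\rho^{\otimes n}} - nR \right)}, \quad \forall \alpha\in(\tfrac12,1).
\end{align}

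Next, I would use the fact that the Petz-type quantity $I^\downarrow_\beta(X{\,:\,}B)_\rho = D_\beta(\rho_{XB}\|\rho_X\otimes\rho_B)$ is additive on tensor products of c-q states, since $(\rho_{XB}^{\otimes n},\rho_X^{\otimes n}\otimes\rho_B^{\otimes n})$ is a product pair and $D_\beta$ is additive on product states. Therefore $I^\downarrow_{2-\sfrac{1}{\alpha}}(X^n{\,:\,}B^n)_{\rho^{\otimes n}} = n\, I^\downarrow_{2-\sfrac{1}{\alpha}}(X{\,:\,}B)_\rho$. Substituting this and taking the supremum of the exponent over $\alpha\in(\tfrac12,1)$ (noting that $\frac{\alpha-1}{\alpha}(I - nR) = -\frac{1-\alpha}{\alpha}(I-nR)$) delivers the claimed lower bound.

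For the positivity claim, let $\mathsf{F}(R):=\sup_{\alpha\in(\sfrac12,1)} \frac{1-\alpha}{\alpha}\bigl(I^\downarrow_{2-\sfrac{1}{\alpha}}(X{\,:\,}B)_\rho - R\bigr)$. For $R < I(X{\,:\,}B)_\rho$, I would observe that $\beta := 2 - \sfrac{1}{\alpha}\to 1^-$ as $\alpha\to 1^-$, and by \eqref{eq:alpha1} we have $I^\downarrow_{\beta}(X{\,:\,}B)_\rho\to I(X{\,:\,}B)_\rho$. Hence for $\alpha$ sufficiently close to $1$ from below, $I^\downarrow_{2-\sfrac{1}{\alpha}}(X{\,:\,}B)_\rho > R$; combined with $\frac{1-\alpha}{\alpha}>0$ this makes $\mathsf{F}(R)>0$. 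Conversely, for $R\geq I(X{\,:\,}B)_\rho$, the monotone non-decreasing behavior of $\beta\mapsto D_\beta(\rho_{XB}\|\rho_X\otimes\rho_B)$ gives $I^\downarrow_{2-\sfrac{1}{\alpha}}(X{\,:\,}B)_\rho \leq I(X{\,:\,}B)_\rho\leq R$ for every $\alpha\in(\tfrac12,1)$ (since $2-\sfrac{1}{\alpha}<1$), so each term in the supremum is $\leq 0$, yielding $\mathsf{F}(R)\leq 0$.

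The proof is short because the conceptual work is already absorbed into Theorem~\ref{theo:sc}; here the main routine ingredients are additivity of $I^\downarrow_\beta$ on product c-q states and the limit $I^\downarrow_\beta\to I$ as $\beta\to 1$. The only subtle point I expect is the direction of monotonicity in the positivity argument: one must keep straight that increasing $\alpha\in(\sfrac12,1)$ corresponds to increasing $\beta = 2-\sfrac{1}{\alpha}\in(0,1)$, so $I^\downarrow_{2-\sfrac{1}{\alpha}}$ approaches $I(X{\,:\,}B)_\rho$ from below as $\alpha\uparrow 1$, while the prefactor $(1-\alpha)/\alpha$ simultaneously vanishes --- the interior supremum is therefore strictly positive precisely in the regime $R<I(X{\,:\,}B)_\rho$.
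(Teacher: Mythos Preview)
Your proposal is correct and follows exactly the paper's approach: the paper's proof consists of the two sentences ``The estimate follows from Theorem~\ref{theo:sc} and the additivity $I_{2-\sfrac{1}{\alpha}}^{\downarrow}(X^n;B^n)_{\rho^{\otimes n}} = n I_{2-\sfrac{1}{\alpha}}^{\downarrow}(X{\,:\,}B)_\rho$. The positivity follows from the monotonicity of $\alpha\mapsto D_\alpha$ and \eqref{eq:alpha1}.'' Your write-up simply unpacks these ingredients (and supplies both directions of the positivity argument) with no deviation in strategy.
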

\begin{proof}The estimate follows from
	Theorem~\ref{theo:sc} and the additivity $I_{2-\sfrac{1}{\alpha}}^{\downarrow}(X^n;B^n)_{\rho^{\otimes n}} = n I_{2-\sfrac{1}{\alpha}}^{\downarrow}(X{\,:\,}B)_\rho$.
	The positivity follows from the monotonicity of $\alpha\mapsto D_\alpha$ \cite[Lemma 3.12]{MO14} and \eqref{eq:alpha1}.
\end{proof}

\subsection{Random Constant Composition Codebook} \label{sec:sc_cc}
\begin{prop}[Exponential strong converse using random constant composition codebook] \label{prop:sc_cc}
	For any $n\in\mathds{N}$, consider a classical-quantum state $\rho_{XB} = \sum_{x\in\mathcal{X}} p_X(x) |x\rangle\langle x|\otimes \rho_B^x$, where $n p_X(x) \in 0\cup\mathbb{N}$ for all $x\in\mathcal{X}$, and let $R:= \frac1n \log |\breve{\mathcal{C}}^n|$, where $\breve{\mathcal{C}}^n$ for a random constant composition codebook given in \eqref{eq:composition_codebook}.
	Then, there exists $k_p>0$ only depending on $p_X$ such that
	\begin{align}
	\frac12\mathds{E}_{\breve{\mathcal{C}}^n} \left\|\rho_{B^n}^{\breve{\mathcal{C}}^n} - \breve{\rho}_{B^n} \right\|_1 \geq 1 - 
	n^{k_p} \e^{ -
	n\sup_{\alpha \in (\sfrac12,1)} \frac{1-\alpha}{\alpha}\left( \breve{I}_{2-\sfrac{1}{\alpha}}^{\downarrow}(X{\,:\,}B)_\rho - R\right) }.
	\end{align}
	Moreover, the exponent $\sup_{\alpha \in (\sfrac12,1)} \frac{1-\alpha}{\alpha}( \breve{I}_{2-\sfrac{1}{\alpha}}^{\downarrow}(X{\,:\,}B)_\rho - R)$ is positive if and only if $R< I(X{\,:\,}B)_\rho$.
\end{prop}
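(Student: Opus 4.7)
The plan is to apply the one-shot strong converse of Theorem~\ref{theo:sc} to the c-q state $\breve{\rho}_{X^nB^n}$ together with the random codebook $\breve{\mathcal{C}}^n$, which is legitimate because its codewords are drawn i.i.d.~from $\breve{p}_{X^n}$. This immediately yields, for every $\alpha\in(\sfrac12,1)$ and with $\beta:=2-\sfrac{1}{\alpha}\in(0,1)$,
\begin{align*}
\tfrac12\mathds{E}_{\breve{\mathcal{C}}^n}\|\rho_{B^n}^{\breve{\mathcal{C}}^n}-\breve{\rho}_{B^n}\|_1\ge 1-4\e^{\frac{\alpha-1}{\alpha}\bigl(D_\beta(\breve{\rho}_{X^nB^n}\,\|\,\breve{p}_{X^n}\otimes\breve{\rho}_{B^n})-nR\bigr)}.
\end{align*}
The remaining task is to lower bound this multi-letter Petz divergence by $n\breve{I}_\beta^\downarrow(X{\,:\,}B)_\rho$ up to a polynomial correction in $n$.

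The structural fact I will exploit is the operator inequality $\breve{\rho}_{B^n}\le(n+1)^{|\mathcal{X}|}\rho_B^{\otimes n}$, obtained from standard type counting: for any $y^n\in T_p^n$ the probability $p_X^{\otimes n}(y^n)$ takes a common value $p_\ast$, so $\rho_B^{\otimes n}\ge p_\ast|T_p^n|\breve{\rho}_{B^n}$, and the mode-type estimate $p_\ast|T_p^n|=p_X^{\otimes n}(T_p^n)\ge(n+1)^{-|\mathcal{X}|}$ closes the claim. Since $1-\beta\in(0,1)$, operator monotonicity of $A\mapsto A^{1-\beta}$ yields $\breve{\rho}_{B^n}^{1-\beta}\le(n+1)^{|\mathcal{X}|(1-\beta)}(\rho_B^{\otimes n})^{1-\beta}$. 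Combined with the exact product identity valid for every $x^n\in T_p^n$,
\begin{align*}
\Tr\bigl[(\rho_{B^n}^{x^n})^\beta(\rho_B^{\otimes n})^{1-\beta}\bigr]=\prod_{x\in\mathcal{X}}\Tr[(\rho_B^x)^\beta\rho_B^{1-\beta}]^{np_X(x)}=\e^{(\beta-1)n\breve{I}_\beta^\downarrow(X{\,:\,}B)_\rho},
\end{align*}
and the fact that $\tfrac{1}{\beta-1}\log(\cdot)$ is decreasing for $\beta<1$, one arrives at the key single-letter estimate $D_\beta(\breve{\rho}_{X^nB^n}\,\|\,\breve{p}_{X^n}\otimes\breve{\rho}_{B^n})\ge n\breve{I}_\beta^\downarrow(X{\,:\,}B)_\rho-|\mathcal{X}|\log(n+1)$.

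Plugging this estimate into the one-shot bound generates a prefactor $4(n+1)^{|\mathcal{X}|(1-\alpha)/\alpha}\le 4(n+1)^{|\mathcal{X}|}$ for $\alpha\in(\sfrac12,1)$, which I absorb into $n^{k_p}$ with $k_p$ depending only on $|\mathcal{X}|$ (hence on $p_X$); optimizing the remaining exponent over $\alpha\in(\sfrac12,1)$ produces the claimed bound. The positivity of $\sup_{\alpha\in(\sfrac12,1)}\frac{1-\alpha}{\alpha}\bigl(\breve{I}_{2-\sfrac{1}{\alpha}}^\downarrow(X{\,:\,}B)_\rho-R\bigr)$ iff $R<I(X{\,:\,}B)_\rho$ follows verbatim from the argument in Proposition~\ref{prop:sc_iid}, using the monotonicity of $\alpha\mapsto D_\alpha$ and the limit~\eqref{eq:alpha1}. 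I expect the main obstacle to be precisely the single-letter reduction: there is no exact additivity of Petz-type quantities for constant-composition sources, and only the mode-type comparison, applied in the direction dictated by the negative sign of $\frac{\alpha-1}{\alpha}$ in the exponent, supplies a polynomial rather than exponential loss. The reverse substitution lacks such control, so the asymmetry of the argument is essential.
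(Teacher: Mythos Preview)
Your proposal is correct and follows essentially the same route as the paper: apply Theorem~\ref{theo:sc} to the constant-composition source $\breve p_{X^n}$, then replace $\breve\rho_{B^n}$ by $\rho_B^{\otimes n}$ via the operator inequality $\breve\rho_{B^n}\le (p_X^{\otimes n}(T_p^n))^{-1}\rho_B^{\otimes n}$ and operator monotonicity of $(\cdot)^{1-\beta}$, after which the trace factorizes exactly into the single-letter Augustin-type quantity for every $x^n\in T_p^n$. The only cosmetic difference is that you bound $p_X^{\otimes n}(T_p^n)\ge (n+1)^{-|\mathcal X|}$ by standard type counting, whereas the paper invokes a sharper Stirling-type estimate from Csisz\'ar--K\"orner; both yield a polynomial-in-$n$ prefactor absorbed into $n^{k_p}$.
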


\begin{remark}
	Jensen's inequality together with the concavity of logarithmic function show that
	\begin{align}
		\tfrac{1-\alpha}{\alpha}{I}_{\alpha}^{\downarrow}(X{\,:\,}B)_\rho  \leq \tfrac{1-\alpha}{\alpha}\breve{I}_{\alpha}^{\downarrow}(X{\,:\,}B)_\rho, \quad \forall \alpha \in (0,1).
	\end{align}
	Hence, Propositions~\ref{prop:sc_iid} and \ref{prop:sc_cc} show that the expected value of the trace distance using random composition codebook converges to $1$ faster than that of using random i.i.d.~codebook, albeit with a vanishing higher-order term.
\end{remark}

\begin{proof}
	Applying the one-shot strong converse established in Theorem~\ref{theo:sc} with prior distribution $\breve{p}_{X^n}$ and the random constant composition codebook $\breve{\mathcal{C}}^n$, we obtain
	\begin{align} \label{eq:sc_cc1}
		\frac12\mathds{E}_{\breve{\mathcal{C}}^n} \left\|\rho_{B^n}^{\breve{\mathcal{C}}^n} - \breve{\rho}_{B^n} \right\|_1
		\geq 1 - 4 M^s \mathds{E}_{x^n\sim \breve{p}_{X^n}} \Tr\left[ \left( \rho_{B^n}^{x^n} \right)^{1-s} \left(\breve{\rho}_{B^n}\right)^s \right], \quad \forall s\in(0,1), \; n\in\mathds{N}.
	\end{align}
	Note that
	\begin{align} \label{eq:sc_cc2}
		\breve{\rho}_{B^n} = \sum_{x^n\in\mathcal{X}^n} \frac{ \mathbf{1}_{x^n \in T_{p}^n} p_X^{\otimes n}(x^n) \rho_{B^n}^{x^n} }{p_X^{\otimes n}\left( T_{p}^n \right)  }
		\leq \sum_{x^n\in\mathcal{X}^n} \frac{ p_X^{\otimes n}(x^n) \rho_{B^n}^{x^n} }{p_X^{\otimes n}\left( T_{p}^n \right)  }
		&= \frac{\rho_B^{\otimes n} }{ p_X^{\otimes n}\left( T_{P}^n \right)  }.
	\end{align}
	Since $(\,\cdot\,)^s$ is operator monotone for $s\in(0,1)$, we combine \eqref{eq:sc_cc1} and \eqref{eq:sc_cc2} to get
	\begin{align}
		\frac12\mathds{E}_{\breve{\mathcal{C}}^n} \left\|\rho_{B^n}^{\breve{\mathcal{C}}^n} - \breve{\rho}_{B^n} \right\|_1
		&\geq 1 - 4 M^s \mathds{E}_{x^n\sim \breve{p}_{X^n}} \Tr\left[ \left( \rho_{B^n}^{x^n} \right)^{1-s} \left(\rho_B^{\otimes n}\right)^s \right] \left(p_X^{\otimes n}\left( T_{p}^n \right)\right)^{-s}\\
		&= 1 - 4 M^{\frac{1-\alpha}{\alpha}} \mathrm{e}^{n \frac{\alpha-1}{\alpha} \breve{I}_{2-\sfrac{1}{\alpha}}^{\downarrow}(X{\,:\,}B)_\rho }  \left(p_X^{\otimes n}\left( T_{p}^n \right)\right)^{\frac{\alpha-1}{\alpha}},
		\quad \forall \alpha\in(\sfrac12,1),
	\end{align}
	where we have used substitution $\alpha = \frac{1}{1+s}$.
	
	By \cite[p.~26]{CK11}, the probability of the set of all sequences with composition $P$ under $P^{\otimes n}$ is
	\begin{align}
		p_X^{\otimes n}\left( T_{p}^n \right) = \mathrm{e}^{- \xi \frac{|\texttt{supp}(p_X)|}{12 \log 2}} ( 2\pi n)^{- \frac{|\texttt{supp}(p_X)|-1}{2} } \sqrt{ \prod_{x:p_X(x)>0} \frac{1}{p_X(x)} }
	\end{align}
	for some $\xi\in[0,1]$.
	
	Taking
	\begin{align}
		K_p := \frac{|\texttt{supp}(p_X)|}{12 \log 2} + \frac{|\texttt{supp}(p_X)|-1}{2}\cdot\log (2\pi) + \frac12\sum_{x\in\texttt{supp}(p_X)} \log p_X(x)  + \log 4,
	\end{align}
	thus proves our claim of exponential decay.
	
	Again, the positivity follows from the non-decreasing map $\alpha\mapsto D_\alpha$ \cite[Lemma 3.12]{MO14} and \eqref{eq:alpha1}.
\end{proof}

\section{Moderate Deviation Analysis} \label{sec:moderate}

In previous sections, we study the large deviation analysis for quantum soft covering. We characterize the exponential error behaviors when the random codebook size is fixed.
In this section, we extends our results to the moderate deviation regime \cite{CH17, CTT2017}. That is, we derive the asymptotic error behaviors (in terms of trace distance) when the rate $R_n$ of codebook size (as a function of blocklength $n$) approaches $I(X{\,:\,}B)_{\rho}$ at certain speed.
The central question we want to ask here is that if  $R_n$ approaches $I(X{\,:\,}B)_{\rho}$ only \emph{moderately quickly}, can the trace distance still vanish?
As will be shown in the following Propositions~\ref{prop:moderate_iid} and \ref{prop:moderate_cc}, the answers are affirmative for both the random i.i.d.~codebook and the random constant composition codebook when  $R_n$ approaches $I(X{\,:\,}B)_{\rho}$ no faster than $O(\sfrac{1}{\sqrt{n}})$.

We call $(a_n)_{n\in\mathds{N}}$ a moderate deviation sequence if it satisfies
\begin{align} \label{eq:an}
	\lim_{n\to\infty} a_n = 0, \quad
	\lim_{n\to\infty} n a_n^2 = \infty.
\end{align}

\begin{prop}[Moderate deviations using random i.i.d.~codebook] \label{prop:moderate_iid}
	Let $\rho_{XB}$ be a classical-quantum state satisfying $V(X{\,:\,}B)_{\rho}>0$.
	We have the following result for any moderate deviation sequence $(a_n)_{n\in\mathds{N}}$:
	\begin{align}
		\begin{dcases}
			\liminf_{n\to \infty} - \frac{1}{n a_n^2} \log \left( \frac12 \mathbb{E}_{\mathcal{C}^n}\left\| \rho_{B^n}^{\mathcal{C}^n} - \rho_B^{\otimes n}\right\|_1 \right) \geq \frac{1}{2 V(X{\,:\,}B)_\rho},
			& \text{ if } |\mathcal{C}^n| = \e^{n(I(X{\,:\,}B)_\rho + a_n)} \\
			\liminf_{n\to \infty} - \frac{1}{n a_n^2} \log \left( 1-\frac12 \mathbb{E}_{\mathcal{C}^n}\left\| \rho_{B^n}^{\mathcal{C}^n} - \rho_B^{\otimes n}\right\|_1 \right) \geq \frac{1}{2 V(X{\,:\,}B)_\rho}, & \text{ if } |\mathcal{C}^n| = \e^{n(I(X{\,:\,}B)_\rho - a_n)}
		\end{dcases}.
	\end{align}
\end{prop}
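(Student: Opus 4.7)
The plan is to derive both halves of the moderate deviation statement by applying the single-letter exponential bounds established earlier, namely Proposition \ref{prop:achievability_iid} for the achievability and Proposition \ref{prop:sc_iid} for the strong converse, with a carefully chosen order parameter $\alpha_n$ that approaches $1$ at a rate tied to the moderate deviation sequence $(a_n)$. The central analytic input will be a Taylor expansion of the relevant R\'enyi information quantities around $\alpha=1$, using that both $I_{\alpha}^{*}(X{\,:\,}B)_\rho$ and $I_{\alpha}^{\downarrow}(X{\,:\,}B)_\rho$ converge to $I(X{\,:\,}B)_\rho$ with first derivative $\tfrac{1}{2}V(X{\,:\,}B)_\rho$ at $\alpha=1$.

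For the achievability case $R_n = I(X{\,:\,}B)_\rho + a_n$, Proposition \ref{prop:achievability_iid} gives the upper bound $\exp\!\big(-n\,\tfrac{1-\alpha}{\alpha}(I_\alpha^*(X{\,:\,}B)_\rho - R_n)\big)$ for any $\alpha\in(1,2)$. Taylor expanding yields
\begin{align}
I_\alpha^*(X{\,:\,}B)_\rho = I(X{\,:\,}B)_\rho + \tfrac{1}{2}(\alpha-1)\,V(X{\,:\,}B)_\rho + O((\alpha-1)^2),
\end{align}
so substituting $\alpha = \alpha_n := 1 + a_n/V(X{\,:\,}B)_\rho$ and noting $\alpha_n\to 1$, the exponent becomes
\begin{align}
\tfrac{1-\alpha_n}{\alpha_n}\bigl(\tfrac{1}{2}(\alpha_n-1)V(X{\,:\,}B)_\rho - a_n + O((\alpha_n-1)^2)\bigr)
= \tfrac{a_n^2}{2V(X{\,:\,}B)_\rho}\bigl(1+o(1)\bigr).
\end{align}
Multiplying by $n$ and dividing by $n a_n^2$ before taking the $\liminf$ yields the desired lower bound $\tfrac{1}{2V(X{\,:\,}B)_\rho}$.

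The strong converse case is handled symmetrically. For $R_n = I(X{\,:\,}B)_\rho - a_n$, Proposition \ref{prop:sc_iid} gives $1-\tfrac12 \mathbb{E}\|\cdot\|_1 \le 4\exp\!\big(-n\,\tfrac{1-\alpha}{\alpha}(I_{2-1/\alpha}^{\downarrow}(X{\,:\,}B)_\rho - R_n)\big)$ for $\alpha\in(\tfrac12,1)$. Using the chain rule, $\tfrac{d}{d\alpha}I_{2-1/\alpha}^{\downarrow}(X{\,:\,}B)_\rho\big|_{\alpha=1} = \tfrac{d}{d\beta}D_\beta(\rho_{XB}\|\rho_X\otimes\rho_B)\big|_{\beta=1} = \tfrac{1}{2}V(X{\,:\,}B)_\rho$, giving the expansion
\begin{align}
I_{2-1/\alpha}^{\downarrow}(X{\,:\,}B)_\rho = I(X{\,:\,}B)_\rho + \tfrac{1}{2}(\alpha-1)V(X{\,:\,}B)_\rho + O((\alpha-1)^2).
\end{align}
Taking $\alpha_n := 1 - a_n/V(X{\,:\,}B)_\rho$ produces the same exponent $\tfrac{a_n^2}{2V(X{\,:\,}B)_\rho}(1+o(1))$, and the factor of $4$ in front is absorbed into the $\liminf$ after normalizing by $n a_n^2$ since $\log 4 / (n a_n^2) \to 0$ by \eqref{eq:an}.

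The main technical obstacle is justifying the Taylor expansions with uniform control on the remainder. For the Petz-type quantity $I_{2-1/\alpha}^{\downarrow}$ this is straightforward because the quantity is just a relative entropy evaluated at the fixed pair $(\rho_{XB},\rho_X\otimes\rho_B)$, so smoothness of $\alpha\mapsto D_\alpha$ near $\alpha=1$ suffices. For the sandwiched R\'enyi information $I_\alpha^*$, defined as an infimum over $\sigma_B$, one invokes either an envelope-theorem argument or, more concretely, simply upper-bounds the infimum by plugging in the minimizer $\sigma_B=\rho_B$ at $\alpha=1$; since this suboptimal choice gives $D_\alpha^*(\rho_{XB}\|p_X\otimes\rho_B)$ which expands cleanly around $\alpha=1$, the same upper bound on the exponent still holds and that is all that is needed for the achievability direction. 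Continuity of $V(X{\,:\,}B)_\rho$ as the second derivative of $D_\alpha^*$ at $\alpha=1$ (see e.g.~\cite[Lemma 3.5]{MO14}) then closes the argument.
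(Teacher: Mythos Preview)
Your proposal is correct and follows essentially the same route as the paper: apply the $n$-shot achievability (Proposition~\ref{prop:achievability_iid}) and strong converse (Proposition~\ref{prop:sc_iid}) bounds, Taylor expand the relevant R\'enyi information around $\alpha=1$ using Lemma~\ref{lemm:updiff}, and choose $\alpha_n = 1 \pm a_n/V(X{\,:\,}B)_\rho$. The one small deviation is your suggested shortcut for the achievability side---bypassing the envelope theorem by upper-bounding $I_\alpha^*$ via the suboptimal choice $\sigma_B=\rho_B$, which gives the smooth function $D_\alpha^*(\rho_{XB}\|\rho_X\otimes\rho_B)$ directly; this is a legitimate simplification since only an upper bound on $I_\alpha^*$ is needed to lower-bound the exponent, whereas the paper invokes the full differentiability of $\alpha\mapsto I_\alpha^*$ from Lemma~\ref{lemm:updiff} (note the latter only yields a remainder of order $o(\alpha-1)$, not $O((\alpha-1)^2)$, but either suffices).
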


\begin{prop}[Moderate deviations using constant composition random codebooks] \label{prop:moderate_cc}
	For any $n\in\mathbb{N}$, consider a c-q state $\rho_{XB} = \sum_{x\in\mathcal{X}} p_X(x) |x\rangle\langle x|\otimes \rho_B^x$, where $n p_X(x) \in 0\cup\mathbb{N}$ for all $x\in\mathcal{X}$, and consider a random constant composition codebook given in \eqref{eq:composition_codebook}.
	Suppose $\breve{V}(X{\,:\,}B)_{\rho}>0$.
	We have the following result for any moderate deviation sequence $(a_n)_{n\in\mathds{N}}$ defined in \eqref{eq:an}:
	\begin{align}
		\liminf_{n\to \infty} - \frac{1}{n a_n^2} \log \left( \frac12\mathds{E}_{\breve{\mathcal{C}}^n} \left\|{\rho}_{B^n}^{\breve{\mathcal{C}}^n} - \breve{\rho}_{B^n} \right\|_1 \right) \geq \frac{1}{2 \breve{V}(X{\,:\,}B)_\rho},
		& \text{ if } |\breve{\mathcal{C}}^n| = \e^{n(I(X{\,:\,}B)_\rho + a_n)}.
	\end{align}
\end{prop}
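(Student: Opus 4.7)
The plan is to apply the one-shot achievability bound in Theorem~\ref{theo:achievability_composition} and then optimize the parameter $\alpha$ in a moderate-deviation-compatible way. Concretely, with $R_n := I(X{\,:\,}B)_\rho + a_n$ and $|\breve{\mathcal{C}}^n| = \mathrm{e}^{nR_n}$, Theorem~\ref{theo:achievability_composition} yields
\begin{align}
\tfrac12\mathds{E}_{\breve{\mathcal{C}}^n} \left\|\rho_{B^n}^{\breve{\mathcal{C}}^n} - \breve{\rho}_{B^n}\right\|_1 \leq \mathrm{e}^{-n \sup_{\alpha\in(1,2)} \frac{1-\alpha}{\alpha}\left(\breve{I}_\alpha^{*}(X{\,:\,}B)_\rho - R_n\right)}.
\end{align}
It therefore suffices to exhibit a sequence $\alpha_n \downarrow 1$ inside $(1,2)$ such that $n a_n^{-2}\cdot \tfrac{1-\alpha_n}{\alpha_n}(\breve{I}_{\alpha_n}^{*}(X{\,:\,}B)_\rho - R_n)$ is asymptotically at least $\tfrac{1}{2\breve{V}(X{\,:\,}B)_\rho}$.

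The key technical input is a second-order Taylor expansion of $\breve{I}_\alpha^{*}$ at $\alpha=1$. Rather than controlling the infimum in the definition~\eqref{eq:sandwiched_Augustin}, I would use the suboptimal choice $\sigma_B = \rho_B$, which yields the upper bound
\begin{align}
\breve{I}_\alpha^{*}(X{\,:\,}B)_\rho \leq \sum_{x\in\mathcal{X}} p_X(x)\, D_\alpha^{*}(\rho_B^x \,\|\, \rho_B) =: \widetilde{I}_\alpha.
\end{align}
Because $\tfrac{1-\alpha}{\alpha}<0$ for $\alpha\in(1,2)$, any upper bound on $\breve{I}_\alpha^{*}$ translates into a \emph{lower} bound on the exponent, so we may work with $\widetilde{I}_\alpha$. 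The standard Taylor expansion of the sandwiched R\'enyi divergence around $\alpha=1$ (which depends only on $\rho_B^x,\rho_B$ and is uniform in $x$ since $\mathcal{X}$ is finite and $p_X$-almost-everywhere $\mathrm{supp}(\rho_B^x)\subseteq\mathrm{supp}(\rho_B)$) gives $D_\alpha^{*}(\rho_B^x\|\rho_B) = D(\rho_B^x\|\rho_B) + \tfrac{\alpha-1}{2} V(\rho_B^x\|\rho_B) + O((\alpha-1)^2)$. Averaging over $x\sim p_X$ then yields
\begin{align}
\widetilde{I}_\alpha = I(X{\,:\,}B)_\rho + \tfrac{\alpha-1}{2}\, \breve{V}(X{\,:\,}B)_\rho + O\bigl((\alpha-1)^2\bigr).
\end{align}

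To conclude, I would pick $\alpha_n := 1 + t_n$ with $t_n := a_n/\breve{V}(X{\,:\,}B)_\rho$, which lies in $(1,2)$ for all large $n$ since $a_n\downarrow 0$. Substituting and writing $\breve{V}\equiv\breve{V}(X{\,:\,}B)_\rho$,
\begin{align}
\tfrac{1-\alpha_n}{\alpha_n}\bigl(\widetilde{I}_{\alpha_n} - R_n\bigr) = \tfrac{t_n}{1+t_n}\Bigl(a_n - \tfrac{t_n \breve{V}}{2} - O(t_n^2)\Bigr) = \tfrac{a_n^2}{2\breve{V}}\cdot \tfrac{1}{1+t_n}\bigl(1 - O(a_n)\bigr),
\end{align}
so the exponent in the achievability bound is at least $n a_n^2\bigl(\tfrac{1}{2\breve{V}} + o(1)\bigr)$. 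Dividing by $n a_n^2$ and sending $n\to\infty$, using $a_n\to 0$ and $na_n^2\to\infty$, yields the claim. The main technical obstacle is ensuring the Taylor remainder is uniform in $x$ and that the error $O(t_n^2)/a_n = O(a_n)$ is indeed of lower order; this is routine once one works on the (finite) support of $p_X$ and uses standard second-order differentiability of $\alpha\mapsto D_\alpha^{*}(\rho_B^x\|\rho_B)$ at $\alpha=1$.
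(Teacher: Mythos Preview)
Your proposal is correct and in fact takes a slightly more elementary route than the paper's proof. Both arguments start from Theorem~\ref{theo:achievability_composition} and choose $\alpha_n = 1 + a_n/\breve{V}(X{\,:\,}B)_\rho$, but they differ in how the first-order expansion at $\alpha=1$ is obtained. The paper expands $\breve{I}_\alpha^{*}(X{\,:\,}B)_\rho$ itself, which requires Lemma~\ref{lemm:Augustin}: continuous differentiability of the sandwiched Augustin information on $[1,2]$ with derivative $\breve{V}(X{\,:\,}B)_\rho/2$ at $\alpha=1$. That lemma is non-trivial because $\breve{I}_\alpha^{*}$ is defined through an infimum over $\sigma_B$, and the paper establishes it by showing strict positive-definiteness of the Hessian of $\sigma_B \mapsto \mathds{E}_{x\sim p_X} D_\alpha^{*}(\rho_B^x\|\sigma_B)$ (including at $\alpha=1$) and then invoking an implicit-function-type result.

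You instead bypass the infimum entirely by fixing $\sigma_B=\rho_B$, which upper-bounds $\breve{I}_\alpha^{*}$ by the elementary average $\widetilde{I}_\alpha = \mathds{E}_{x\sim p_X} D_\alpha^{*}(\rho_B^x\|\rho_B)$; since $\tfrac{1-\alpha}{\alpha}<0$ on $(1,2)$, this indeed lower-bounds the exponent, and the Taylor expansion of $\widetilde{I}_\alpha$ is straightforward because both arguments of each $D_\alpha^{*}$ are fixed. Your observation that $\widetilde{I}_1 = I(X{\,:\,}B)_\rho$ and $\tfrac{\mathrm{d}}{\mathrm{d}\alpha}\widetilde{I}_\alpha|_{\alpha=1} = \tfrac12\breve{V}(X{\,:\,}B)_\rho$ is exactly right, so the same choice of $\alpha_n$ gives the same limit. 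What the paper's approach buys is the differentiability of the Augustin information as an independent structural fact; what your approach buys is a shorter, self-contained proof of the proposition that sidesteps that analysis altogether.
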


\begin{remark} \label{remark:moderate_cc}
	We note that for rate below $I(X{\,:\,}B)_\rho$, the following statement
	\begin{align} \label{eq:moderate_cc1}
		\liminf_{n\to \infty} - \frac{1}{n a_n^2} \log \left( 1 - \frac12\mathds{E}_{\breve{\mathcal{C}}^n} \left\|{\rho}_{B^n}^{\breve{\mathcal{C}}^n} - \breve{\rho}_{B^n} \right\|_1 \right) \geq \frac{1}{2 \breve{V}(X{\,:\,}B)_\rho }, & \text{ if } |\breve{\mathcal{C}}^n| = \e^{n(I(X{\,:\,}B)_\rho - a_n)}
	\end{align}
	holds for a kind of moderate deviation sequence $a_n = \Theta(n^{-t})$ for any $t \in (0,\sfrac12)$ which is a special case of \eqref{eq:an}.
	Nonetheless, \eqref{eq:moderate_cc1} studies the situation where the trace converges to $1$, which is of less practical importance than the characterization of vanishing error scenario given in Proposition~\ref{prop:moderate_cc}.
\end{remark}

Before proving our claims, we shall employ the following first-order derivatives of those entropic information quantities appearing in the exponent functions.

\begin{lemm}[{\cite[Proposition 11]{Tomhay16}, \cite{CH17}}]\label{lemm:updiff}
	For every classical-quantum state $\rho_{XB}$, the map $\alpha \mapsto I^*_{\alpha}(X{\,:\,}B)_{\rho}$ are continuously differentiable on $\alpha \in [1,2]$, and maps
	$\alpha \mapsto I^{\downarrow}_{2-\sfrac{1}{\alpha}}(X{\,:\,}B)_{\rho}$ and $\alpha \mapsto \breve{I}^{\downarrow}_{2-\sfrac{1}{\alpha}}(X{\,:\,}B)_{\rho}$ are analytical on $\alpha \in [\sfrac12,1]$.
	Moreover,
	\begin{align}
		\left.\frac{\mathrm{d}}{\mathrm{d}\alpha}I^*_{\alpha}(X{\,:\,}B)_{\rho}\right|_{\alpha = 1} = \left.\frac{\mathrm{d}}{\mathrm{d}\alpha}I^{\downarrow}_{2-\sfrac{1}{\alpha}}(X{\,:\,}B)_{\rho}\right|_{\alpha = 1} = \frac{V(X{\,:\,}B)_{\rho}}{2}, \quad
		\left.\frac{\mathrm{d}}{\mathrm{d}\alpha}\breve{I}^{\downarrow}_{2-\sfrac{1}{\alpha}}(X{\,:\,}B)_{\rho}\right|_{\alpha = 1} = \frac{\breve{V}(X{\,:\,}B)_{\rho}}{2}.	
	\end{align}
\end{lemm}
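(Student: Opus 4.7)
The plan is to reduce each derivative claim to the second-order Taylor expansion of the R\'enyi divergences at order one, and then to handle the optimization appearing in $I^*_\alpha$ by an envelope-type argument. The crucial building block is the standard expansion: for states $\rho,\sigma$ with $\mathrm{supp}(\rho) \subseteq \mathrm{supp}(\sigma)$, the maps $\beta \mapsto \tr[\rho^\beta \sigma^{1-\beta}]$ and $\beta \mapsto \tr[(\sigma^{(1-\beta)/(2\beta)} \rho \sigma^{(1-\beta)/(2\beta)})^\beta]$ are real-analytic in a complex neighborhood of $\beta = 1$ (once one restricts to $\mathrm{supp}(\sigma)$), being compositions of operator exponentials and traces. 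Dividing $\log$ of either quantity by $\beta - 1$ and applying L'H\^opital's rule yields the well-known expansion
\begin{align}
D_\beta(\rho\|\sigma) = D(\rho\|\sigma) + \tfrac{\beta-1}{2} V(\rho\|\sigma) + O\bigl((\beta-1)^2\bigr),
\end{align}
together with the identical expansion for $D_\beta^*(\rho\|\sigma)$. These are the standard second-order R\'enyi expansions (see \cite{Tomhay16, MO14}).

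The Petz-type conclusions follow immediately. Since $I^{\downarrow}_{2-1/\alpha}(X{\,:\,}B)_\rho = D_{2-1/\alpha}(\rho_{XB}\|\rho_X \otimes \rho_B)$ and $\breve{I}^{\downarrow}_{2-1/\alpha}(X{\,:\,}B)_\rho = \sum_x p_X(x)\, D_{2-1/\alpha}(\rho_B^x\|\rho_B)$ involve no optimization, and $\beta := 2-1/\alpha$ is analytic in $\alpha$ on $(0,\infty)$, composition with the previous step gives analyticity on a neighborhood of $[\sfrac12, 1]$. The chain rule at $\alpha = 1$ uses $\d\beta/\d\alpha|_{\alpha=1} = 1$ to produce the claimed values $V(X{\,:\,}B)_\rho/2$ and $\breve{V}(X{\,:\,}B)_\rho/2$. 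For the sandwiched case, set $f(\alpha, \sigma) := D^*_\alpha(\rho_{XB}\|\rho_X \otimes \sigma)$, so $I^*_\alpha(X{\,:\,}B)_\rho = \inf_\sigma f(\alpha, \sigma)$. Since $D(\rho_{XB}\|\rho_X\otimes\sigma) = I(X{\,:\,}B)_\rho + D(\rho_B\|\sigma)$, the limiting functional $f(1,\cdot)$ is uniquely minimized at $\sigma = \rho_B$. Restricting the optimization to densities on $\mathrm{supp}(\rho_B)$ gives a compact convex domain on which $f$ and $\partial_\alpha f$ are jointly continuous. Berge's maximum theorem together with strict convexity of $D(\rho_B\|\cdot)$ implies any minimizer satisfies $\sigma^*(\alpha) \to \rho_B$ as $\alpha \downarrow 1$, and the envelope theorem delivers
\begin{align}
\left.\tfrac{\d}{\d\alpha} I^*_\alpha(X{\,:\,}B)_\rho\right|_{\alpha=1} \;=\; \left.\tfrac{\partial f}{\partial \alpha}(\alpha, \rho_B)\right|_{\alpha=1} \;=\; \tfrac{V(X{\,:\,}B)_\rho}{2}.
\end{align}
Continuous differentiability of $\alpha \mapsto I^*_\alpha$ on the whole interval $[1,2]$ follows from the same ingredients plus continuity of $\alpha \mapsto \sigma^*(\alpha)$.

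The main obstacle is the envelope argument in the sandwiched case. One cannot differentiate through the infimum naively; one must first secure existence, continuity, and local uniqueness of the minimizer $\sigma^*(\alpha)$. A technical subtlety is that $f(\alpha, \sigma)$ blows up when $\sigma$ has kernel intersecting $\mathrm{supp}(\rho_B)$ for $\alpha > 1$, so the optimization must be restricted to densities on $\mathrm{supp}(\rho_B)$; there, compactness, joint continuity, and strict convexity of the limit make Berge's theorem applicable and thereby yield the envelope identity. These analytic issues are worked out carefully in the cited references \cite{Tomhay16} (Proposition~11) and \cite{CH17}.
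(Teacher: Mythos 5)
Your proposal is correct in outline and follows the standard route: the second-order expansion $D_\beta(\rho\|\sigma)=D(\rho\|\sigma)+\tfrac{\beta-1}{2}V(\rho\|\sigma)+O((\beta-1)^2)$ handles the optimization-free Petz-type quantities directly via the chain rule with $\beta=2-\sfrac1\alpha$, and an envelope argument with a unique, continuously varying minimizer handles $I^*_\alpha$. The paper does not prove this lemma itself (it imports it from the cited references), but its proof of the analogous Lemma~\ref{lemm:Augustin} uses exactly the strategy you describe — strict convexity of $\sigma\mapsto D^*_\alpha(\cdot\|\sigma)$ for all $\alpha\ge 1$ (not only of the limiting functional at $\alpha=1$, which is the one point you should state more explicitly to justify uniqueness of $\sigma^*(\alpha)$ and hence differentiability throughout $[1,2]$) together with an envelope identity for the optimized quantity — so your approach matches.
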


\begin{lemm}\label{lemm:Augustin}
	For every classical-quantum state $\rho_{XB}$, the map $\alpha \mapsto \breve{I}^*_{\alpha}(X{\,:\,}B)_{\rho}$ is continuously differentiable on $\alpha \in [1,2]$.
	Moreover,
	\begin{align}
		\left.\frac{\mathrm{d}}{\mathrm{d}\alpha}\breve{I}^*_{\alpha}(X{\,:\,}B)_{\rho}\right|_{\alpha = 1} = \frac{\breve{V}(X{\,:\,}B)_{\rho}}{2}.	
	\end{align}
\end{lemm}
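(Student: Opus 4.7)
The plan is to mirror the envelope-theorem strategy underlying Lemma~\ref{lemm:updiff}, adapted to the sandwiched Augustin functional. For each $\sigma_B\in\mathcal{S}(\mathcal{H}_B)$, define
\[ g_{\sigma_B}(\alpha):=\sum_{x\in\mathcal{X}}p_X(x)\,D^*_\alpha(\rho_B^x\|\sigma_B), \]
so that $\breve{I}^*_\alpha(X{\,:\,}B)_\rho=\inf_{\sigma_B}g_{\sigma_B}(\alpha)$. Two standard inputs I would use are: (i) for each fixed $\sigma_B$ with $\mathrm{supp}(\rho_B^x)\subseteq\mathrm{supp}(\sigma_B)$, the map $\alpha\mapsto D^*_\alpha(\rho_B^x\|\sigma_B)$ is continuously differentiable near $\alpha=1$ with derivative $V(\rho_B^x\|\sigma_B)/2$ (this is the $D^*_\alpha$-case behind Lemma~\ref{lemm:updiff}); and (ii) the functional $\sigma_B\mapsto g_{\sigma_B}(1)=\sum_x p_X(x)D(\rho_B^x\|\sigma_B)$ is strictly convex on states supported on $\mathrm{supp}(\rho_B)$ and attains its unique minimum $I(X{\,:\,}B)_\rho$ at $\sigma_B=\rho_B$.

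\emph{Upper estimate.} Plugging in the suboptimal $\sigma_B=\rho_B$ gives $\breve{I}^*_\alpha\le g_{\rho_B}(\alpha)$ with equality at $\alpha=1$, so
\[ \limsup_{\alpha\to 1}\frac{\breve{I}^*_\alpha-I(X{\,:\,}B)_\rho}{\alpha-1}\;\le\; g'_{\rho_B}(1)\;=\;\sum_{x}p_X(x)\frac{V(\rho_B^x\|\rho_B)}{2}\;=\;\frac{\breve{V}(X{\,:\,}B)_\rho}{2}. \]

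\emph{Matching lower estimate.} Let $\sigma^*_\alpha$ be any minimizer of $g_\sigma(\alpha)$; existence follows from compactness of the state space and joint lower-semicontinuity (after a reduction to $\mathrm{supp}(\rho_B)$). I would first show $\sigma^*_\alpha\to\rho_B$ as $\alpha\to 1$, using uniqueness of the $\alpha=1$ minimizer together with joint continuity of $g_\sigma(\alpha)$ on compacta. Then decompose
\[ \breve{I}^*_\alpha-I(X{\,:\,}B)_\rho \;=\; \bigl[g_{\sigma^*_\alpha}(1)-g_{\rho_B}(1)\bigr]+\bigl[g_{\sigma^*_\alpha}(\alpha)-g_{\sigma^*_\alpha}(1)\bigr]. \]
The second bracket, via Taylor's theorem uniform in $\sigma$ on a small neighborhood of $\rho_B$, equals $(\alpha-1)\,g'_{\sigma^*_\alpha}(1)+o(\alpha-1)$, and by continuity of $\sigma\mapsto\sum_x p_X(x)V(\rho_B^x\|\sigma)$ this converges to $(\breve{V}/2)(\alpha-1)+o(\alpha-1)$. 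The first bracket is nonnegative and, by strict-convexity (quadratic growth) of $g_\sigma(1)$ at its minimum, is bounded by $C\,\|\sigma^*_\alpha-\rho_B\|_1^{2}$. A Pinsker-style bootstrap using $g_{\sigma^*_\alpha}(\alpha)\le g_{\rho_B}(\alpha)$ together with the quadratic growth yields $\|\sigma^*_\alpha-\rho_B\|_1=O(\sqrt{|\alpha-1|})$, so the first bracket is $o(\alpha-1)$. This gives $\liminf\ge\breve{V}/2$, matching the upper bound. Continuous differentiability on all of $[1,2]$ follows by repeating the argument at arbitrary $\alpha_0$, using strict convexity of $D^*_{\alpha_0}$ in the second argument to guarantee uniqueness of the Augustin mean for $\alpha_0>1$.

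\emph{Main obstacle.} The delicate technical points are (a) verifying continuity of the sandwiched Augustin mean $\alpha\mapsto\sigma^*_\alpha$ up to and across $\alpha=1$ — particularly when $\rho_B$ is not full rank, which forces a support-reduction step — and (b) establishing the uniform quadratic-growth lower bound for $g_\sigma(1)-I(X{\,:\,}B)_\rho$ with enough uniformity to control $\|\sigma^*_\alpha-\rho_B\|_1=O(\sqrt{|\alpha-1|})$. Both are standard but somewhat technical; once they are in hand the rest is first-order calculus, and the continuity of the derivative elsewhere in $[1,2]$ follows from analyticity of $D^*_\alpha$ in $\alpha$ combined with the (now-uniform) envelope argument.
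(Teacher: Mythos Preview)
Your proposal is correct and follows essentially the same envelope-theorem strategy as the paper: both arguments hinge on (i) strict convexity (positive-definite Hessian) of $\sigma\mapsto g_\sigma(\alpha)$, which guarantees a unique, continuously varying Augustin mean, and (ii) the known fact that $\partial_\alpha D^*_\alpha(\rho\|\sigma)|_{\alpha=1}=\tfrac12 V(\rho\|\sigma)$. The only difference is packaging: the paper verifies the Hessian condition directly (citing \cite[Lemma~23]{HT14} for $\alpha>1$ and computing at $\alpha=1$) and then invokes the black-box differentiability-of-infimum result \cite[Lemma~24]{HT14}, whereas you reprove that envelope step by hand via the upper/lower decomposition---your acknowledged ``obstacles'' (a) and (b) are precisely the hypotheses that lemma encapsulates.
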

\begin{proof}[Proof of Lemma~\ref{lemm:Augustin}]
	We adopt the short notation $p\equiv p_X$
	$\sigma=\sigma_B$ and $\rho_x=\rho_B^x$.
	Let  $\sigma_{\alpha, p}^\star$ be the \emph{order-$\alpha$ Augustin mean} that attains the infimum in the definition of the order-$\alpha$ Augustin information, i.e.~
	\begin{align}
		\breve{I}^*_{\alpha}(X{\,:\,}B)_\rho = \mathds{E}_{x\sim p} D_\alpha^*\left(\rho_x \| \sigma_{\alpha,p}^\star \right).
	\end{align}
	From definition of the sandwiched Augustin information $\breve{I}^*_{\alpha}$ given in \eqref{eq:sandwiched_Augustin}, we know that the map
	\begin{align}
		(\alpha,\sigma_B) \mapsto \mathds{E}_{x\sim p} D_\alpha^*\left(\rho_x \| \sigma_B \right)
	\end{align}
	is twice Fr\'echet differentiable and the Augustin mean $\sigma_{\alpha, p}^\star$ exists \cite{CGH18}. It was shown in \cite[
Lemma 23]{HT14} that for any state $\rho$ and $\alpha>1$, the function
\[ \sigma \mapsto \left\|\sigma^{\frac{1-\alpha}{2\alpha}}\rho \sigma^{\frac{1-\alpha}{2\alpha}}\right\|_\alpha^\al\]
has strictly positive definite Hessian. Since $t\mapsto \log t$ is a strictly increasing function,
 the function
\begin{align}
		(\alpha,\sigma_B) \mapsto \mathds{E}_{x\sim p} D_\alpha^*\left(\rho_x \| \sigma_B \right)=\frac{1}{\alpha-1}  \sum_{x\in\mathcal{X}} p(x) \log \left\| \sigma_B^{\frac{1-\alpha}{2\alpha} }  \rho_x \sigma_B^{\frac{1-\alpha}{2\alpha}} \right\|_\alpha^\alpha.
	\end{align}
 has strictly positive Hessian with respect to $\sigma_B$ for all $\alpha\in (1,\infty)$. This is also true for $\alpha=1$.  Note that for the infimum it sufficient to consider $\sigma$ with $\text{supp}(\sigma)=\text{supp}(\rho_B)$. At $\alpha=1$
 \[ \sigma \mapsto \mathds{E}_{x\sim p} D\left(\rho_x \| \sigma \right)=\sum_{x}p(x)\tr\left[\rho_x\log \rho_x-\rho_x\log \sigma\right]\pl.\]
 Given a traceless Hermitian matrix $h$, we denote $\sigma_t=\sigma+th$. The Hessian is
 \begin{align*}\frac{\d^2 \Big(\mathds{E}_{x\sim p}D(\rho_x\|\sigma_t)\Big)}{\d t^2}=&\sum_{x}p(x) \int_{0}^\infty \tr\left[\rho_x(\sigma_t+s)^{-1}h(\sigma_t+s)^{-1}h(\sigma_t+s)^{-1} \right] \d s 
 \\=&  \int_{0}^\infty \tr\left[\rho_B(\sigma_t+s)^{-1}h(\sigma_t+s)^{-1}h(\sigma_t+s)^{-1} \right] \d s \, .
 \end{align*}
 Suppose $\rho_B\ge \mu \texttt{supp}(\rho_B)$ and $\sigma\le \lambda \texttt{supp}(\rho_B)$ for some $\mu,\lambda>0$. Then at $t=0$,
 \begin{align*}\left.\frac{\d^2 \sum_{x}p(x)D(\rho_x\|\sigma_t)}{\d t^2}\right|_{t=0}
 &=  \int_{0}^\infty \tr\left[ \rho_B(\sigma+s)^{-1}h(\sigma+s)^{-1}h(\sigma+s)^{-1} \right] \d s 
 \\ &\ge   \mu\int_{0}^\infty \tr\left[(\sigma+s)^{-2}h(\sigma+s)^{-1} h \right] \d s 
 \\ &\ge   \mu\int_{0}^\infty \tr\left[(\lambda+s)^{-2}h(\sigma+s)^{-1} h \right] \d s 
 \\ &\ge   \mu\int_{0}^\infty \tr\left[(\lambda+s)^{-3}h^2 \right]) \d s \\ 
 &= \frac{\mu}{2\lambda^2}\norm{h}{2}^2,
 \end{align*}
 which implies strictly positive Hessian. By \cite[Lemma 24]{HT14}, we have the continuous differentiability of the map $\alpha \mapsto \breve{I}^*_{\alpha}(X{\,:\,}B)_{\rho}$ on $[1,2]$ and that	
	\begin{align}
		\frac{\mathrm{d}}{\mathrm{d}\alpha}\breve{I}^*_{\alpha}(X{\,:\,}B)_{\rho} =
		\left. \frac{\partial}{\partial\alpha}\mathds{E}_{x\sim p} D_\alpha^*\left(\rho_x \| \sigma \right)\right|_{\sigma = \sigma_{\alpha,p}^\star }.
	\end{align}
	Using the fact  \cite{LT15}  that $\left.\frac{\mathrm{d}}{\mathrm{d}\alpha} D_\alpha(\rho||\sigma)\right|_{\alpha=1} = \frac12 V(\rho||\sigma)$ for fixed $\rho$ and $\sigma$
	completes our proof.
\end{proof}
Now, we are ready to prove our claims of moderate deviation analysis.
\begin{proof}[Proofs of Propositions~\ref{prop:moderate_iid} and \ref{prop:moderate_cc}]

We prove Proposition~\ref{prop:moderate_cc}. The proof of Proposition~\ref{prop:moderate_iid} follows from similar reasoning. For the first claim of Proposition~\ref{prop:moderate_cc}, by Theorem~\ref{theo:achievability_composition},
\begin{align}
		\frac12\mathds{E}_{\breve{\mathcal{C}}^n} \left\|{\rho}_{B^n}^{\breve{\mathcal{C}}^n} - \breve{\rho}_{B^n} \right\|_1
		\leq
		\mathrm{e}^{  - n  \sup_{\alpha \in (1,2)} \frac{1-\alpha}{\alpha} \left( \breve{I}_\alpha^{*} \left( X{\,:\,}B \right)_\rho - R_n \right) }.
\end{align}
where $R_n=\frac{1}{n}\log |\breve{\mathcal{C}}^n|$ is the rate of the random constant composition codebook size.
Using Lemma \ref{lemm:Augustin}, we apply Taylor's series expansion of $\mapsto \breve{I}^*_{\alpha}(X{\,:\,}B)_{\rho}$ at $\alpha=1$:
\begin{align}
	\breve{I}^*_{\alpha}(X{\,:\,}B)_{\rho} = I(X{\,:\,}B)_{\rho} + \frac{\alpha-1}{2} \breve{V}(X{\,:\,}B)_{\rho} + \mathscr{R}(\alpha -1),
\end{align}
where $\mathscr{R}(\alpha-1)$ is a continuous function satisfying $\frac{\mathscr{R}(\alpha-1)}{\alpha-1}\to 0$ as $\alpha\to 1$. Let $\alpha_n =  1+\frac{a_n}{\breve{V}(X{\,:\,}B)_{\rho}}$. Using the above expansion and $R_n = I(X{\,:\,}B)_{\rho} + a_n$, we have $1<\alpha_n\leq2$ for all sufficiently large $n\in\mathds{N}$, and
\begin{align}
	\sup\limits_{1<\alpha\leq2}\left\{\frac{1-\alpha}{\alpha}(\breve{I}^{*}_{\alpha}(X{\,:\,}B)_{\rho} - R_n)\right\}
	&\geq \frac{1-\alpha_n}{\alpha_n}(\breve{I}^{*}_{\alpha_n}(X{\,:\,}B)_{\rho} - R_n)\\
	&= \frac{1}{1+\frac{a_n}{\breve{V}(X{\,:\,}B)_{\rho}}}\left(\frac{a_n^2}{2V(X{\,:\,}B)_{\rho}}-\frac{a_n^2}{\breve{V}(X{\,:\,}B)_{\rho}^2}\frac{\mathscr{R}(\alpha_n-1)}{\alpha_n-1}\right)\\
	& =\frac{a_n^2}{2\breve{V}(X{\,:\,}B)_{\rho}}\frac{1}{1+\frac{a_n}{\breve{V}(X{\,:\,}B)_{\rho}}}\left(1-\frac{2}{\breve{V}(X{\,:\,}B)_{\rho}}\frac{\mathscr{R}(\alpha_n-1)}{\alpha_n-1}\right).
\end{align}
Hence,
\begin{align}
	- \frac{1}{n a_n^2} \log \left(\frac12\mathds{E}_{\breve{\mathcal{C}}^n} \left\|{\rho}_{B^n}^{\breve{\mathcal{C}}^n} - \breve{\rho}_{B^n} \right\|_1 \right) \geq -\frac{\log 2}{na_n^2}+\frac{1}{2\breve{V}(X{\,:\,}B)_{\rho}}\frac{1}{1+\frac{a_n}{\breve{V}(X{\,:\,}V)_{\rho}}}\left(1-\frac{2}{\breve{V}(X{\,:\,}B)_{\rho}}\frac{\mathscr{R}(\alpha_n-1)}{\alpha_n-1}\right).
\end{align}
Taking $n\to \infty$ and using the definition of $a_n$,
\begin{align}
	\liminf_{n\to \infty} - \frac{1}{n a_n^2} \log \left(  \frac12\mathds{E}_{\breve{\mathcal{C}}^n} \left\|{\rho}_{B^n}^{\breve{\mathcal{C}}^n} - \breve{\rho}_{B^n} \right\|_1\right) \geq \frac{1}{2 \breve{V}(X{\,:\,}B)_\rho},
\end{align}
which proves the first claim in Proposition~\ref{prop:moderate_cc}. The second claims in Proposition~\ref{prop:moderate_cc} and Remark~\ref{remark:moderate_cc} follows similarly by recalling Theorems~\ref{theo:achievability_iid_oneshot} \& \ref{theo:sc}, Proposition~\ref{prop:sc_cc}, and Lemma~\ref{lemm:updiff} (see also the derivations given in \cite{CH17, CHD+21, SGC22a}).
\end{proof}

\section{Conclusions} \label{sec:conclusions}
In this work, we establish achievability and  strong converse for quantum soft covering using the random i.i.d.~codebook and the random constant composition codebook with the codebook size being fixed. In both settings, we obtain exponential convergence of the trace distance to $0$ or respectively to $1$, which measure the closeness between the codebook-induced state and the true marginal state. As a consequence, our results in achievability and strong converse combined implies that the optimal rate of quantum soft covering is the quantum mutual information $I(X{\,:\,}B)_\rho$.
We remark that our results hold for every blocklength $n\in\mathds{N}$, providing a large deviation analysis when the operating rate is fixed \cite{Hao-Chung, CHDH2-2018, CGH18, CHT19, Cheng2021a, Cheng2021b, LY21a, LY21b, SGC22a}.
Our results also extend to the moderate deivation regime when the rates approaches $I(X{\,:\,}B)_\rho$ moderately quickly \cite{CH17, CTT2017}.
Lastly, it is interesting to note that the sandwiched R\'enyi information $I^*\alpha(X{\,:\,}B)_\rho$ used in the exponent appears in classical-quantum channel coding as well \cite{WWY14, MO17, LY21b, Hay2112, SGC22a}, while the sandwiched Augustin information $\breve{I}^*\alpha(X{\,:\,}B)_\rho$ has appeared in other contexts using constant composition codes \cite{DW14, CGH18, MO18, CHDH2-2018}.

\section*{Acknowledgement}
H.-C.~Cheng would like to thank Bar\i\c{s} Nakibo\u{g}lu for discussions.
H.-C.~Cheng is supported by the Young Scholar Fellowship (Einstein Program) of the Ministry of Science and Technology in Taiwan (R.O.C.) under Grant MOST 110-2636-E-002-009, and are supported by the Yushan Young Scholar Program of the Ministry of Education in Taiwan (R.O.C.) under Grant NTU-110V0904,  Grant NTU-CC-111L894605, and Grand NTU-111L3401.


\section*{Appendix: Complex Interpolation and Noncommutative {$L_p$} Spaces} \label{sec:interpolation}

In this section, we briefly review the definition of the complex interpolation. We refer to \cite{BL76} for a detailed account of interpolation spaces.  Let $X_0$ and $X_1$ be two Banach spaces. Assume that there exists a Hausdorff topological vector space $X$ such that $X_0, X_1\subset X$ as subspaces. Let $\mathcal{S}=\{z\,|0\le \textsf{Re} (z)\le 1\}$ be the unit vertical strip on the complex plane, and $\mathcal{S}_0=\{z\,|0< \textsf{Re} (z)< 1\}$ be its open interior. Let $\F(X_0, X_1)$ be the space of all functions $f:\mathcal{S}\to X_0+X_1$, which are bounded and continuous on $\mathcal{S}$ and analytic on $\mathcal{S}_0$, and moreover
\[\{f(\mathrm{i}t)\pl|\pl t\in \mathbb{R}\}\subset X_0\pl ,\pl \{f(1+\mathrm{i}t)\pl|\pl t\in \mathbb{R}\}\subset X_1\pl.\]
$\F(X_0, X_1)$ is again a Banach space equipped with the norm
\[\norm{f}{\F} :=\max\left\{\pl \sup_{t\in \mathbb{R}} \norm{f(\mathrm{i}t)}{X_0}\pl,\pl \sup_{t\in \mathbb{R}}\norm{f(1+\mathrm{i}t)}{X_1}\right\}\pl. \]
The complex interpolation space $(X_0,X_1)_\theta$, for $0\le \theta\le 1$, is the quotient space of $\F(X_0,X_1)$ as follows,
\[(X_0, X_1)_\theta=\{\pl x\in X_0+X_1\pl| \pl x=f(\theta) \pl \text{for some }\pl  f\in \F(X_0, X_1)\pl\} \pl.\]
where quotient norm is
\begin{align}\label{def}\norm{x}{\theta}=\inf \{\pl\norm{f}{\F}\pl| \pl f(\theta)=x \pl\}\pl .\end{align}
It is clear from the definition that $X_0=(X_0,X_1)_0,  X_1=(X_0,X_1)_1$. For all $0<\theta<1$,
$(X_0,X_1)_\theta$ are called interpolation space of $(X_0,X_1)$.

The most basic example is that the $p$-integrable function spaces $L_p(\Omega,\mu)$ of a positive measure space $(\Omega,\mu)$. $L_p(\Omega,\mu)$ for $1\le p\le \infty$ forms a family of interpolation spaces, i.e.
\begin{align}\label{eq:in} L_p(\Omega,\mu)\cong [L_{p_0}(\Omega,\mu),{L_{p_1}}(\Omega,\mu)]_\theta \end{align}
holds isometrically for all $1\le p_0,p_1,p\le \infty, 0\le \theta\le 1$ such that
$\frac{1}{p}=\frac{1-\theta}{p_0}+\frac{\theta}{p_1}$. For a von Neumann algebra $(\mathcal{M},\text{Tr})$ equipped with normal faithful semifinite trace $\text{Tr}$,
the noncommutative $L_p$-norm is defined as $\norm{x}{p}=\text{Tr}(|x|^p)^{\frac{1}{p}}$ and $L_p(\mathcal{M},\text{Tr})$ (or shortly $L_p(\mathcal{M})$) is the completion of $\{x\in \mathcal{M} \pl |\pl \norm{x}{p}<\infty\}$. The noncommutative analog of \eqref{eq:in} is that
\begin{align}\label{eq:in2} L_p(\mathcal{M},\text{Tr})\cong [L_{p_0}(\mathcal{M},\text{Tr}),{L_{p_1}}(\mathcal{M},\text{Tr})]_\theta .\end{align}
In particular, the Schatten-$p$ class on a Hilbert space $\mathcal{H}$ are the $L_p$ spaces of $(\mathcal{B(H)},\tr)$ which satisfies
\[S_p(\mathcal{H}) \cong \left[S_{p_0}(\mathcal{H}), S_{p_1}(\mathcal{H}) \right]_{\theta}\pl.\]
Here $S_\infty(\mathcal{H})$ is identified with $\mathcal{B(H)}$. The complex interpolation relation has been already used in many works in quantum information theory, e.g.~\cite{WWY14,CGH19}. In this work, we will use
 the complex interpolation for a mixture of \eqref{eq:in} and \eqref{eq:in2}.
   For an operator-valued function $f:\Omega \to \mathcal{B(H)}$, its $L_p$ norm is given by
 \[ \norm{f}{L_p(\Omega, S_p(\mathcal{H}))}:=\left(\int_{\Omega} \norm{f(\omega)}{S_p}^p \d\omega\right)^{\sfrac{1}{p}}.\]
$L_p(\Omega, S_p(\mathcal{H}))$ is exactly the $L_p$-space of semi-classical system $L_\infty(\Omega,\mathcal{B(H)})$, which is a von Neumann algebra equipped with the trace $\tau(f)=\int_{\Omega}\tr(f(\omega))d\omega$). Thus $L_p(\Omega, S_p(\mathcal{H}))$ satisfies complex interpolation by \eqref{eq:in2} ($L_\infty(\Omega,\mathcal{B(H)})$. In particular,
$L_2(\Omega, S_2(\mathcal{H}))$ is a Hilbert space with inner product.
\[ \langle f, g\rangle = \int_{\Omega}\tr(f(\omega)^*g(\omega))\d\mu(\omega).\]

One widely used property of interpolation space is the following Riesz–Thorin interpolation theorem.

\begin{theo}[Riesz--Thorin interpolation theorem] \label{thm:interpolation} Let $(X_0, X_1)$ and $(Y_0, Y_1)$ be two compatible couples of Banach spaces and let $(X_0, X_1)_\theta$ and $(Y_0, Y_1)_\theta$ be the corresponding interpolation space of exponent $\theta$. Suppose $T:X_0 + X_1 \to Y_0 + Y_1$, is a linear operator bounded from $X_j$ to $Y_j$, $j=0, 1$. Then $T$ is bounded from $(X_0, X_1)_\theta$ to $(Y_0, Y_1)_\theta$, and moreover
\[ \|T:(X_0, X_1)_\theta\to (Y_0, Y_1)_\theta\|_{}\leq \|T:X_0\to Y_0\|_{}^{1-\theta }\|T:X_1\to Y_1\|_{}^{\theta}.\]
\end{theo}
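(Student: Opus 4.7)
The plan is to prove the bound $\|T:(X_0,X_1)_\theta \to (Y_0,Y_1)_\theta\| \le M_0^{1-\theta} M_1^\theta$, where $M_j := \|T: X_j \to Y_j\|$, via a direct scaling argument on admissible extensions. First I would fix $x \in (X_0, X_1)_\theta$ and, for arbitrary $\varepsilon > 0$, pick $f \in \F(X_0, X_1)$ with $f(\theta) = x$ and $\|f\|_{\F} \le \|x\|_\theta + \varepsilon$, which exists by the definition \eqref{def} of the quotient norm.

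The key idea is to multiply $T \circ f$ by an analytic scalar that rebalances the two different operator norms on the two edges of the strip $\mathcal{S}$. Assuming $M_0, M_1 > 0$ (otherwise replace $M_j$ by $M_j + \delta$ and let $\delta \downarrow 0$ at the end), I would set $\phi(z) := M_0^{z-1} M_1^{-z}$, which is entire and bounded on $\mathcal{S}$, and define $h(z) := \phi(z) \, T(f(z))$. Since $T$ extends linearly and boundedly to $X_0 + X_1$ (with norm at most $M_0 + M_1$), the composition $T \circ f$ is bounded and continuous from $\mathcal{S}$ to $Y_0 + Y_1$ and analytic on $\mathcal{S}_0$, and the boundary values satisfy $Tf(it) \in Y_0$ and $Tf(1+it) \in Y_1$; hence $h \in \F(Y_0, Y_1)$.

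Next I would compute the boundary norms. Using $|\phi(it)| = M_0^{-1}$ and $|\phi(1+it)| = M_1^{-1}$, one obtains $\|h(it)\|_{Y_0} \le M_0^{-1} \cdot M_0 \|f(it)\|_{X_0} = \|f(it)\|_{X_0}$ and similarly $\|h(1+it)\|_{Y_1} \le \|f(1+it)\|_{X_1}$, so $\|h\|_{\F(Y_0, Y_1)} \le \|f\|_{\F(X_0, X_1)}$. Evaluating at $z = \theta$ yields $h(\theta) = M_0^{\theta-1} M_1^{-\theta}\, T x$, so the definition of the interpolation norm gives $\|Tx\|_{(Y_0, Y_1)_\theta} \le M_0^{1-\theta} M_1^\theta \|h\|_{\F(Y_0, Y_1)} \le M_0^{1-\theta} M_1^\theta (\|x\|_\theta + \varepsilon)$. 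Letting $\varepsilon \downarrow 0$ completes the estimate.

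The proof is structurally short because all of the analytic content (essentially a Hadamard three-lines inequality) is already baked into the definition of $(X_0, X_1)_\theta$ as a quotient of $\F(X_0, X_1)$; we merely transport one extension to another by an explicit multiplier. The only delicate points I expect are verifying that $h$ is genuinely in $\F(Y_0, Y_1)$ (in particular, bounded on the entire strip in the $Y_0+Y_1$ norm, not just analytic on $\mathcal{S}_0$) and handling the degenerate cases $M_j = 0$ via a limiting argument. Neither is a real obstacle, so I anticipate this appendix theorem is the most routine statement in the excerpt.
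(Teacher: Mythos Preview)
Your proposal is correct and is the standard argument for the abstract Riesz--Thorin theorem via the multiplier $\phi(z)=M_0^{z-1}M_1^{-z}$. However, the paper does not actually prove this statement: it is quoted in the appendix as a well-known property of complex interpolation spaces (with reference to \cite{BL76}) and no proof is supplied, so there is nothing to compare against.
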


\bibliographystyle{myIEEEtran}
\bibliography{reference.bib}

\end{document}